\documentclass[preprint, tightenlines]{revtex4-1}
\usepackage{hyperref}
\usepackage{mathtools}
\usepackage{amsmath,mathrsfs}
\usepackage{graphicx}%
\usepackage{amsfonts}%
\usepackage{amssymb}
\usepackage{amsthm}
\usepackage{diagbox}

\newtheorem{theorem}{Theorem}
\newtheorem{definition}[theorem]{Definition}
\newtheorem{lemma}[theorem]{Lemma}
\newtheorem{proposition}[theorem]{Proposition}

\newcommand{\RNum}[1]{\uppercase\expandafter{\romannumeral #1\relax}}

\begin{document}

\title{Learning Nonlinear Input-Output Maps with Dissipative Quantum Systems}
\author{Jiayin \surname{Chen}}
\author{Hendra I. \surname{Nurdin}}
\email{h.nurdin@unsw.edu.au (corresponding author)}
\affiliation{School of Electrical Engineering and Telecommunications, The University of New South Wales (UNSW), Sydney NSW 2052, Australia}

\begin{abstract}
In this paper, we develop a theory of learning nonlinear input-output maps with fading memory by dissipative quantum systems, as a quantum counterpart of the theory of approximating such maps using classical dynamical systems. The theory identifies the properties required for a class of dissipative quantum systems to be {\em universal}, in that any input-output map with fading memory can be approximated arbitrarily closely by an element of this class.  We then introduce an example class of dissipative quantum systems that is provably universal. Numerical experiments illustrate that with a small number of qubits, this class can achieve comparable performance to classical learning schemes with a large number of tunable parameters. Further numerical analysis suggests that the exponentially increasing Hilbert space presents a potential  resource for dissipative quantum systems to surpass classical learning schemes for input-output maps. 
\end{abstract}

\maketitle

\section{Introduction}
\label{sec:introduction}
We are in the midst of the noisy intermediate-scale quantum (NISQ) technology era \cite{Preskill18}, marked by noisy quantum computers consisting of roughly tens to hundreds of  qubits. Currently there is a substantial interest in early applications of these machines that can accelerate the development of practical quantum computers, akin to how the humble hearing aid stimulated the development of integrated circuit (IC) technology \cite{Mills11}. NISQ quantum computing machines will not be equipped with quantum error correction and are thus incapable of performing continuous quantum computation. 

Several research directions are being explored for NISQ-class machines. One direction is to demonstrate so-called ``quantum supremacy'', in which NISQ machines can perform computational tasks that are demonstrably out of the reach of the most powerful digital supercomputers. The computational tasks include sampling problems such as boson sampling \cite{AA11,LBR17}, instantaneous quantum polynomial (IQP) computation \cite{BJS10,LBR17}, and sampling from random quantum circuits \cite{boixo2018characterizing}. Recent works have also proposed quantum machine learning algorithms that offer provable speedups over their classical counterparts \cite{biamonte2017quantum}. Another direction is the development of variational algorithms on hybrid classical-quantum machines to solve certain classes of optimization problems. Algorithms proposed include the quantum approximate optimization algorithm (QAOA) \cite{FGG14}, the quantum variational eigensolver (QVE) \cite{PMSYZLAO13,MRBA16} and variations and generalizations thereof, e.g., \cite{WHB18,mitarai2018quantum}. Experimental demonstration of QVE for calculating the ground-state energy of small molecules has been reported in \cite{KMTTBCG17}, while the application of QAOA for unsupervised learning of a clustering problem can be found in \cite{Otterbach17}.  

An alternative paradigm to the quantum gate-based approaches above is to harness the computational capability of dissipative quantum systems. Dissipative quantum dynamics has been shown to be able to realize universal quantum computation \cite{verstraete2009quantum} and has been applied in a time-delay fashion for supervised quantum machine learning without intermediate measurements \cite{alvarez2017supervised}. Recently, quantum reservoir computers (QRCs) are introduced to harness the complex real-time quantum dissipative dynamics \cite{fujii2017harnessing,nakajima2019boosting}. This approach is essentially a quantum implementation of classical \textit{reservoir computing} schemes, in which a dynamical system processes an input sequence and produces an output sequence that approximates a target sequence, see, e.g.,  \cite{JH04,MNM02,LJ09}. The main philosophy in reservoir computing is that the dynamics in arbitrary naturally occurring or engineered  dynamical systems could potentially be exploited for computational purposes. In particular, a dynamical system could be used for computation without precise tuning or optimization of its parameters.  To possess temporal information,  the systems are required to satisfy three properties \cite{MNM02}:  the \textit{convergence property} \cite{PWN05}, the \textit{fading memory property} \cite{BC85} and form a family of systems with the \textit{separation property}. The convergence property ensures that computations performed by a dynamical system are independent of its initial condition, and the fading memory property implies that outputs of a dynamical system stay close if the corresponding inputs are close in recent times. The separation property states that there should be a member in the family of systems with dynamics sufficiently rich to distinguish any two different input sequences. Classical reservoir computing has been realized as simple nonlinear photonic circuits with a delay line \cite{Appel11} and in neuromorphic computing based on nanoscale oscillators \cite{Torre17}, and it has been demonstrated to achieve state-of-the-art performance on applications such spoken digit recognition \cite{Torre17}.

Nonlinear input-output (I/O) maps with fading memory can be approximated by a series expansion such as the well-known Volterra series \cite{BC85}. They can also be approximated by a family of classical nonlinear dynamical systems that have the three properties introduced in the previous paragraph. Such a family of dynamical systems is said to be {\em universal} (or possesses the universality property) for nonlinear I/O maps with fading memory. They  include various classical reservoir computing schemes such as liquid state machines \cite{MNM02}, echo-state networks (ESNs) \cite{grigoryeva2018echo}, linear reservoirs with polynomial readouts (LRPO), and non-homogeneous state-affine systems (SAS) \cite{grigoryeva2018universal}. However, a theoretical framework for the learning of nonlinear fading memory I/O maps by quantum systems is so far lacking. Moreover, an extended investigation into the potential advantage quantum systems offer over classical reservoir computing schemes has not been conducted. The provision of such a learning theory, the demonstration of a class of quantum model that is provably universal, and a study of this model via numerical experiments are the main contributions of this paper.

The paper is organized as follows. In Sec.~\ref{sec:fading_memory_maps}, we formally define fading memory maps. In Sec.~\ref{sec:learning}, we formulate the theory of learning nonlinear fading memory maps with dissipative quantum systems. Sec.~\ref{sec:universal_class} introduces a concrete universal class of dissipative quantum systems. Sec.~\ref{sec:numerical} numerically demonstrates the emulation performance of the proposed universal class in the absence and presence of decoherence. The effect of different input encodings on the learning capability of this class is investigated. An in-depth comparison between this universal class and ESNs is also conducted. We conclude this section by discussing the potential of this universal class to surpass classical schemes when implemented on a NISQ machine. In Sec.~\ref{sec:discussion}, we discuss the feasibility of proof-of-principle experiments of the proposed scheme on existing NISQ machines. Detailed results and numerical settings are collected in and can be found in the Appendix.

\section{Fading memory maps} \label{sec:fading_memory_maps}
Let $\mathbb{Z}$ denote the set of all integers and $\mathbb{Z}^{-} = \{\ldots, -1, 0\}$. Let $u=\{\ldots,u_{-1},u_{0},u_{1},\ldots\}$ be a real bounded input sequence with $\sup_{k \in \mathbb{Z}} |u_k|<\infty$. We say that a real output sequence $y = \{\ldots, y_{-1},y_{0},y_{1},\ldots\} $ is related to $u$ by a time-invariant causal map $M$ if $y_k = M(u)_k = M(\tilde u_{\ell})_k$ for any integer $\ell$, any $k \leq \ell$,  and any sequence $\tilde{u}_{\ell}$  such that $\left. \tilde{u}_{\ell} \right|_{\ell} =  \left. u \right|_{\ell}$.
Here, $M(u)_k$ denotes the output sequence at time $k$ given the input sequence $u$, and $\left. u\right|_k = \{\ldots, u_{k-2}, u_{k-1}, u_{k} \}$ is the input sequence $u$ truncated after time $k$.

For a fixed real positive constant $L$ and a compact subset $D \subseteq \mathbb{R}$, we are interested in the set $K_{L}(D)$ consisting of input sequences such that for all $k \in \mathbb{Z}$, $u_{k} \in D \cap [-L, L]$. We say a time-invariant causal map $M$ defined on $K_{L}(D)$ has the fading memory property with respect to a decreasing sequence $w = \{w_{k}\}_{k \geq 0}$, $\lim_{k \rightarrow \infty} w_{k} = 0$ if, for any two input sequences $u$ and $v$, $|M(u)_0-M( v)_0| \rightarrow 0$ whenever $\sup_{k \in  \mathbb{Z}^-} |w_{-k}(u_k-v_k)| \rightarrow 0$. In other words, if the elements of two sequences agree closely up to some recent past before  $k=0$, then their output sequences will also be close at $k=0$.

\section{Learning nonlinear fading memory maps with dissipative quantum systems}\label{sec:learning}
Since fading memory maps are time-invariant, any dynamical system that is used to approximate them must forget its initial condition. Classical dynamical systems with this property are referred to as {\em convergent systems} in control theory \cite{PWN05}, and the property is known as the {\em echo state property} in the context of ESNs  \cite{JH04,BY06}. For dissipative quantum systems, this means that for the same input sequence, density operators asymptotically converge to the same sequence of density operators, independently of their initial values. We emphasize that the dissipative nature of the quantum system is {\em essential} for the learning task. Without it the system clearly cannot be convergent. 
 
Consider a quantum system consisting of $n$ qubits with a Hilbert space $\mathbb{C}^{2^n}$ of dimension $2^n$ undergoing the following discrete-time dissipative evolution:
\begin{eqnarray}
\rho_{k}  &=& T(u_{k}) \rho_{k-1}, \label{eq:rho-dynamics}
\end{eqnarray}
for $k=1,2,\ldots,$ with initial condition $\rho(0)=\rho_0$. Here, $\rho_k=\rho(k\tau)$ is the system density operator at time $t=k\tau$ and $\tau$ is a (fixed) sampling time, and $T(u_k)$ is a completely positive trace preserving (CPTP) map for each $u_k$. In this setting, the real input sequence $\{u_1, u_2,\ldots\}$ determines the system's evolution. The overall input-output map in the long time limit is in general non-linear. Let $\|\cdot\|_{p}$ denote any Schatten $p$-norm for $p \in [1, \infty)$ defined as $\|A\|_{p} = {\rm Tr}(\sqrt{A^* A}^{p})^{1/p}$, where $A$ is a complex matrix and $*$ is the conjugate transpose operator. In Appendix~[\ref{app:covergence}, Theorem~\ref{theorem:convergence}] , we show that if for all $u_{k} \in D \cap [-L, L]$, the CPTP map $T(u_k)$ restricted on the hyperplane $H_{0}(2^n)$ of $2^n \times 2^n$ traceless Hermitian operators satisfies $\|T(u_k)\rvert_{H_{0}(2^n)}\|_{2-2} \coloneqq \sup_{A \in H_{0}(2^n), A \neq 0}\frac{\|T(u_k)A\|_{2}}{\|A\|_{2}} \leq 1 - \epsilon$ for some $0 < \epsilon \leq 1$, then under any input sequence $u \in K_{L}(D)$, it will forget its initial condition and is therefore convergent. This means that for any two initial density operators $\rho_{j,0}$ ($j=1,2$) and the corresponding density operators $\rho_{j,k}$ at time $t=k\tau$, we will have that
$$
\mathop{\lim}_{k \rightarrow \infty }\| \rho_{1,k} -\rho_{2,k}\|_{2} = \mathop{\lim}_{k \rightarrow \infty}\left\|\left( \overleftarrow{\prod}_{j=1}^{k} T(u_j) \right) (\rho_{1,0}-\rho_{2,0}) \right\|_{2} =0,
$$
where $\overleftarrow{\prod}_{j=1}^{k}$ is a time-ordered composition of maps $T(u_{j})$ from right to left. 

Let $\mathcal{D}(\mathbb{C}^{2^n})$ denote the convex set of all density operators on $\mathbb{C}^{2^n}$. We introduce an output sequence $\bar{y}$ in the form 
\begin{equation}
\bar{y}_k = h(\rho_k), \label{eq:y-dynamics}
\end{equation}
where $h: \mathcal{D}(\mathbb{C}^{2^n}) \rightarrow \mathbb{R}$ is a real functional of $\rho_k$. Eqs. \eqref{eq:rho-dynamics} and \eqref{eq:y-dynamics} define a quantum dynamical system with input sequence $u$ and output sequence $\bar{y}$. We now require the separation property. Consider a family  $\mathcal{F}$ of distinct quantum systems described by Eqs. \eqref{eq:rho-dynamics} and \eqref{eq:y-dynamics}, but possibly having differing number of qubits. Let $u$ and $u'$ be two input sequences in $K_L(D)$ that are not identical, $u_k \neq u'_k$ for at least one $k$, and let $\bar{y}$ and $\bar{y}'$ be the respective outputs of the quantum system for these inputs. We say that the family $\mathcal{F}$ is {\em separating} if for any non-identical inputs $u$ and $u'$ in $K_L(D)$, there exists a member in this family with non-identical outputs $\bar{y}$ and $\bar{y}'$.  As stated in Appendix~[\ref{app:universality}, Theorem~\ref{theorem:universal_dissipative}], any family of convergent dissipative quantum systems that implement  fading memory maps with the separation property, and which forms an algebra of maps containing the constant maps, is universal and can approximate any I/O map with fading memory arbitrarily closely.  

\section{A universal class of dissipative quantum systems} \label{sec:universal_class}
We now specify a class of dissipative quantum systems that is provably universal in approximating fading memory maps defined on $K_{1}([0,1])$.
The class consists of systems that are made up of $N$ {\em non-interacting} subsystems initialized in a product state of the $N$ subsystems, with subsystem $K$ consisting of $n_{K}+1$ qubits, $n_K$ ``system'' qubits and a single ``ancilla" qubit. We label the qubits of subsystem $K$ by an index $i^K_j$ that runs from $j=0$ to $j=n_K$, with $i_0^K$ labeling the ancilla qubit. The $n_K+1$ qubits interact via the Hamiltonian 
$$H_K = \sum_{j_1=0}^{n_K} \sum_{j_2 = j_1+1}^{n_K} J_K^{j_1,j_2} (X^{(i^K_{j_1})}X^{(i^K_{j_2})} + Y^{(i^K_{j_1})}Y^{(i^K_{j_2})}) + \sum_{j=0}^{n_K} \alpha Z^{(i_{j}^{K})},$$
where  $J_K^{j_1,j_2}$ and $\alpha$ are real-valued constants, while $X^{(i^K_{j})}$, $Y^{(i^K_{j})}$ and $Z^{(i^K_j)}$ are Pauli $X$, $Y$ and $Z$ operators of qubit $i^K_j$. The ancilla qubits for all subsystems are periodically reset at time $t =k\tau$ and prepared in the input-dependent mixed state $\rho^K_{i_{0,k}} = u_k |0 \rangle \langle 0| + (1-u_k) |1 \rangle \langle 1|$ (with $0 \leq u_k \leq 1$). The system qubits are initialized at time $t=0$ to some density operator.  The density operator $\rho^K_k$ of the $K^{\rm th}$ subsystem qubits evolves during time $(k-1)\tau < t < k\tau$ according to $\rho^K_k = T_{K}(u_k) \rho^K_{k-1}$, where  $T_{K}(u_k)$ is the CPTP map defined by $
T_{K}(u_k) \rho^K_{k-1} = {\rm Tr}_{i^K_0} \left(e^{- i H_K \tau }\rho^K_{k-1} \otimes \rho^{K}_{i_{0,k}} e^{i H_K \tau }\right)$ and ${\rm Tr}_{i^K_0}$  denotes the partial trace over the ancilla qubit of subsystem $K$.  We now specify an output functional $h$ associated with this system.  We will use a single index to label the system qubits from the $N$ subsystems, the ancilla qubits are not used in the output. Consider an individual system qubit with index $j$, with $j$ running from 1 until $n = \sum_{K=1}^N n_K$. The output functional $h$  is defined to be of the general form,

\begin{equation}
\label{eq:mv-out}
\bar{y}_{k} = h(\rho_k)= C + \sum_{d=1}^{R} \sum_{i_{1} = 1}^{n} \sum_{i_{2} = i_{1} + 1}^{n} \cdots \sum_{i_{n} = i_{n-1}+1}^{n} \sum_{r_{i_1} + \cdots + r_{i_n} = d} w_{i_1, \ldots, i_{n}}^{r_{i_1}, \ldots, r_{i_n}} \langle Z^{(i_{1})} \rangle_{k}^{r_{i_1}} \cdots \langle Z^{(i_n)} \rangle_{k}^{r_{i_n}}
\end{equation}
where $C$ is a constant, $R$ is an integer and $\langle Z^{(i)} \rangle_k = {\rm Tr}(\rho_k Z^{(i)})$ is the expectation of the operator $Z^{(i)}$. We note that the functional $h$ (the right hand side of the above) is a multivariate polynomial in the variables $\langle Z^{(i)} \rangle_k$ ($i=1,\ldots,n$) and these expectation values depend on input sequence $u = \{u_k\}$. Thus computing $\bar{y}_k$ only involves estimating the expectations $\langle Z^{(i)}\rangle_k$ and the degree of the polynomial $R$ can be chosen as desired. If $R=1$ then $\bar{y}_k$ is a simple linear function of the expectations. 

This family of dissipative quantum systems exhibits two important properties, see Appendix~\ref{app:FMP_algebra} and \ref{app:universal_class} for the proofs. Firstly, if for each subsystem $K$ with $n_{K}$ qubits and for all $u_{k} \in [0, 1]$, $\|T_{K}(u_k)\rvert_{H_{0}(2^{n_{K}})}\|_{2-2} \leq 1-\epsilon_{K}$ for some $0 < \epsilon_{K} \leq 1$, then this family forms a polynomial algebra consisting of systems that implement fading memory maps. Secondly, a convergent single-qubit system with a linear output combination of expectation values (ie. $n=1$, $N=1$ and $R=1$), separates points of $K_{1}([0, 1])$. These two properties and an application of the Stone-Weierstrass Theorem \cite[Theorem 7.3.1]{Dieudonne13} guarantee the universality property.

The class specified above is a variant of the QRC model in \cite{fujii2017harnessing} but is provably universal by the theory of the previous section. The differences are in the general form of the output and, in our model, the ancilla qubit is not used in computing the output. Also, we do not consider time-multiplexing. We remark that time-multiplexing can be in principle incorporated in the model using the same theory. However, this extension is more technical and will be pursued elsewhere.

\section{Numerical experiments} \label{sec:numerical}
We demonstrate the emulation performance of the universal class introduced above in learning a number of benchmarking tasks. A random input sequence $u^{(r)} = \{u^{(r)}_{k}\}_{k > 0}$, where each $u_{k}^{(r)}$ is randomly uniformly chosen from $[0, 0.2]$, is applied to all computational tasks. We apply the multitasking method, in which we simulate the evolution of the quantum systems and record the expectations $\langle Z^{(i)} \rangle_{k}$ for all timesteps $k$ once, while the output weights $C$ and $w_{i_1, \ldots, i_{n}}^{r_{i_1}, \ldots, r_{i_n}}$ in Eq.~\eqref{eq:mv-out} are optimized independently for each computational task.

The linear reservoirs with polynomial outputs (LRPO) implement a fading memory map, whose discrete-time dynamics is of the form \cite{grigoryeva2018universal, BC85},
\begin{equation*}
\begin{cases}
x_{k} = Ax_{k-1} + cu_{k} \\
y_{k} = \hat{h}(x_{k}),
\end{cases}
\end{equation*}
where we choose $c \in \mathbb{R}^{1400}$ with elements randomly uniformly chosen from $[0, 4]$ and $\hat{h}$ to be a degree two multivariate polynomial, whose coefficients are randomly uniformly chosen from $[-0.1, 0.1]$. We choose $A$ to be a diagonal block matrix $A = {\rm diag}(A_{1}, A_{2}, A_{3})$, where $A_{1}, A_{2}$ and $A_{3}$ are $200 \times 200$, $500 \times 500$ and $700 \times 700$ real matrices, respectively. Elements of $A_{i}$ $(i=1,2,3)$ are randomly uniformly chosen from $[0, 4]$. To ensure the convergence and the fading memory property, the maximum singular value of each $A_{i}$ is randomly uniformly set to be $\sigma_{\max}(A_{i}) < 1$ \cite{grigoryeva2018universal}. In this setting, each linear reservoir defined by $A_{i}$ evolves independently, while the output of the LRPO depends on all state elements $x_{k} \in \mathbb{R}^{1400}$.

It is interesting to investigate the performance of the universal class in learning tasks that do not strictly implement fading memory maps as defined here. We apply the universal class to approximate the outputs of a missile moving with a constant velocity in the horizontal plane \cite{ni1996new} and the nonlinear autoregressive moving average (NARMA) models \cite{atiya2000new}. The nonlinear dynamics of the missile is given by
\begin{equation*}
\begin{cases}
\dot{x}_{1} = x_{2} - 0.1 \cos(x_1) (5x_1 - 4 x_1^3 + x_1^5) - 0.5 \cos(x_1)\tilde{u} \\
\dot{x}_{2} = -65 x_1 + 50 x_1^3 - 15x_1^5 - x_2 - 100 \tilde{u}
\end{cases}
\end{equation*}
where $y = x_{2}$ is the output. We make a change of variable of the input $\tilde{u}=5u - 0.5$ so that the input range is the same as in \cite{ni1996new}. The missile dynamics is simulated by the Runge-Kutta $(4, 5)$ formula implemented by the ode45 function in MATLAB \cite{dormand1980family}, with a sampling time of $4 \times 10^{-4}$ seconds for a time span of $1$ second, subject to the initial condition $\begin{pmatrix} x_1 & x_2 \end{pmatrix}^{T} = \begin{pmatrix} 0 & 0 \end{pmatrix}^{T}$. We denote this task as Missile. The NARMA models are often used to benchmark algorithms for learning time-series. The outputs of each NARMA model depend on its time-lagged outputs and inputs, specified by a delay $\tau_{\rm NARMA}$. We denote the corresponding task to be NARMA$\tau_{\rm NARMA}$.

We focus on members of the universal class with a single subsystem ($N=1$) and a small number of system qubits $n = \{2, 3, 4, 5, 6\}$, and denote this subset of the universal class as SA. We will drop the subsystem index $K$ from now on. For all numerical experiments, the parameters of SA are chosen as follows. We introduce a scale $S>0$ such that the Hamiltonian parameters $J^{j_1, j_2}/S$, $\alpha/S = 0.5$ and $\tau S = 1$ are dimensionless. As for the QRCs in \cite{fujii2017harnessing}, we randomly uniformly generate $J^{j_1, j_2}/S$ from $[-1, 1]$ and, to ensure convergence, select the resulting Hamiltonians for experiments if the associated CPTP map is convergent. We numerically test the convergence property by checking if $50$ randomly generated initial density operators converge to the same density operator in $500$ timesteps under the input sequence $u^{(r)}$.

Each numerical experiment firstly washouts the effect of initial conditions of SA and all target maps with $500$ timesteps. This is followed by a training stage of $1000$ timesteps, where we optimize the output weights $C$ and $w_{i_1, \ldots, i_{n}}^{r_{i_1}, \ldots, r_{i_n}}$ of SA by standard least squares to minimize the error $\sum_{k=501}^{1500} |y_k - \bar{y}_{k}|^2$ between the target output sequence $y$. In practical implementation, computation of the expectations $\langle Z^{(j)} \rangle_{k}$ is offloaded to the quantum subsystem, and only a simple classical processing method is needed to optimize the output weights. For this reason, we associate the output weights $C$ and $w_{i_1, \ldots, i_{n}}^{r_{i_1}, \ldots, r_{i_n}}$ in Eq.~\eqref{eq:mv-out} with  \textit{(classical) computational nodes}, with the number of such nodes being equal to the number of output weights. While the number of computational nodes for SA can be chosen arbitrarily by varying the degree $R$ in the output, the state-space `size' of the quantum system is $2^{n} (2^{n}+1)-2^{n} = 4^{n}$. This state-space size corresponds to the number of real variables needed to describe the evolution of elements of the system density operator. Note that since the density operator has unity trace, only up to at most $4^n-1$ of these nodes are linearly independent.

On the other hand, for ESNs \cite{JH04}, the number of computational nodes and the state-space size always differ by one (i.e. by the tunable constant output weight). For an ESN with $m$ reservoir nodes (E$m$), the number of computational nodes is $m+1$ and its state-space size is $m$. For the Volterra series \cite{BC85} with kernel order $o$ and memory $p$ (V$o,p$), the number of computational nodes is $(\frac{p^{o+1} - p}{p-1}+1)$. We select $m$ and $(o,p)$ such that the number of computational nodes is at most $801$. This reduces the chance of overfitting for learning a sequence of length $1000$ \cite{lukovsevivcius2012practical}. For detailed numerical settings for ESNs and the Volterra series, see Appendix~\ref{app:numerical}. We analyze the performance of all learning schemes during an evaluation phase consisting of $1000$ timesteps, using the normalized mean-squared error ${\rm NMSE} \coloneqq \sum_{k=1501}^{2500}|\bar{y}_{k} - y_{k}|^2/\sum_{k=1501}^{2500} |y_k - \frac{1}{1000}\sum_{k=1501}^{2500} y_{k}|^2$, where $y$ is the target output and $\bar{y}$ is the approximated output. For each task and each $n$, NMSEs of 100 convergent SA samples are averaged for analysis.

\begin{figure}[ht!]
\centering
\includegraphics[trim={15mm 10mm 0 5mm}, clip=true, scale=0.5]{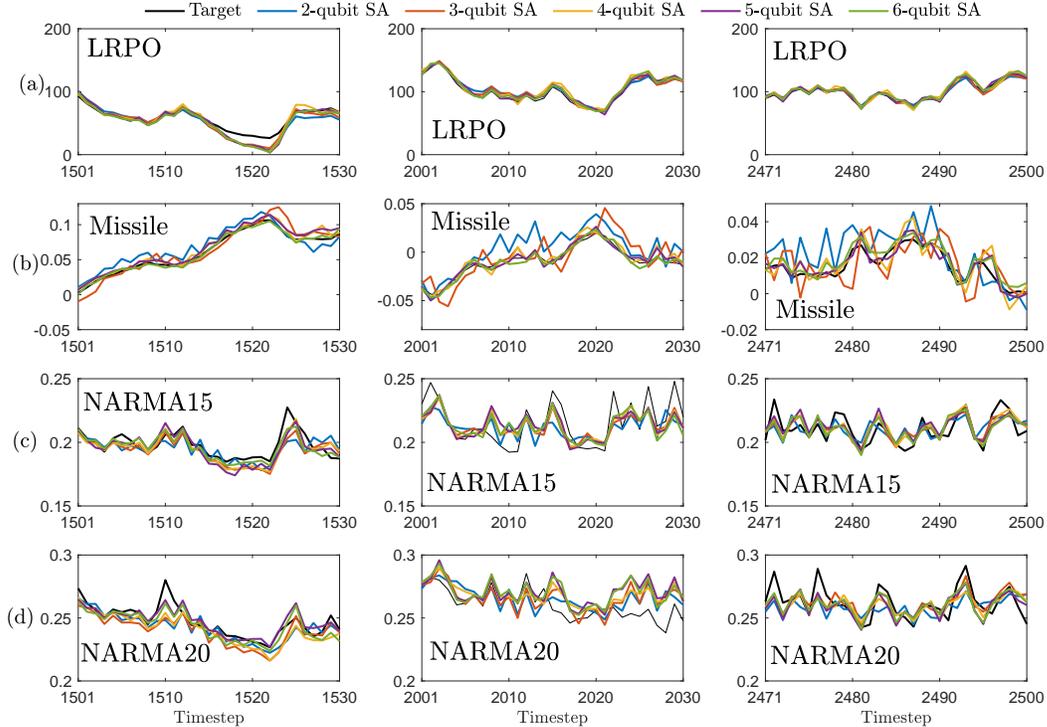}
\caption{Typical SA outputs during the evaluation phase, for the (a) LRPO, (b) Missile (c) NARMA15 and (d) NARMA20 tasks. The leftmost, middle and rightmost panels show the outputs for timesteps 1501-1530, 2001-2030 and 2471-2500, respectively}
\label{figure:trace}
\end{figure}

\begin{figure}
\centering
\includegraphics[trim={38mm 0mm 35mm 5mm}, clip=true, scale=0.45]{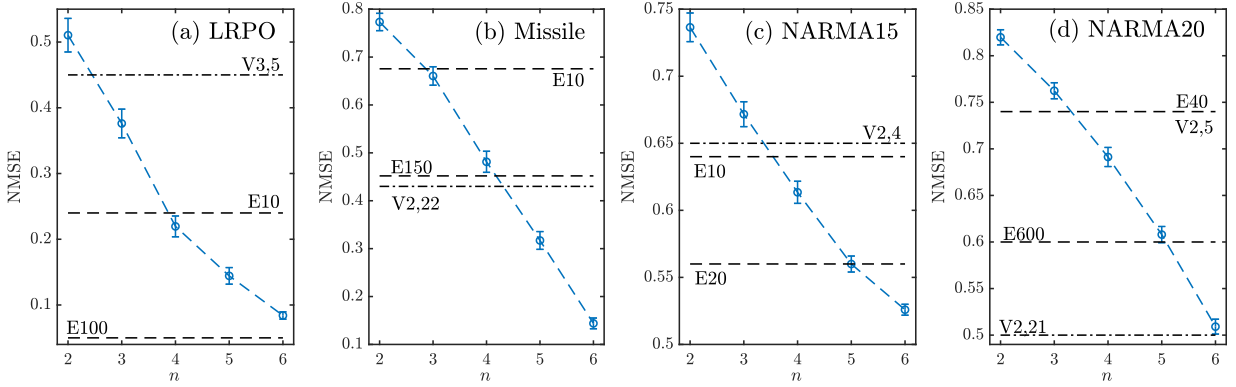}
\caption{Average SA NMSE for the (a) LRPO, (b) Missile, (c) NARMA15 and (d) NARMA20 tasks, the error bars represent the standard error. For comparison, horizontal dashed lines labeled with ``E$m$'' indicate the average performance of ESNs with $m$ computational nodes, and horizontal dot-dashed lines labeled with ``V$o,p$'' indicates the performance of Volterra series with kernel order $o$ and memory $p$. Overlapping dashed and dot-dashed lines are represented as dashed lines}
\label{figure:overview}
\end{figure}

\begin{figure}[ht!]
\centering
\includegraphics[trim={40mm 20mm 35mm 10mm}, clip=true, scale=0.45]{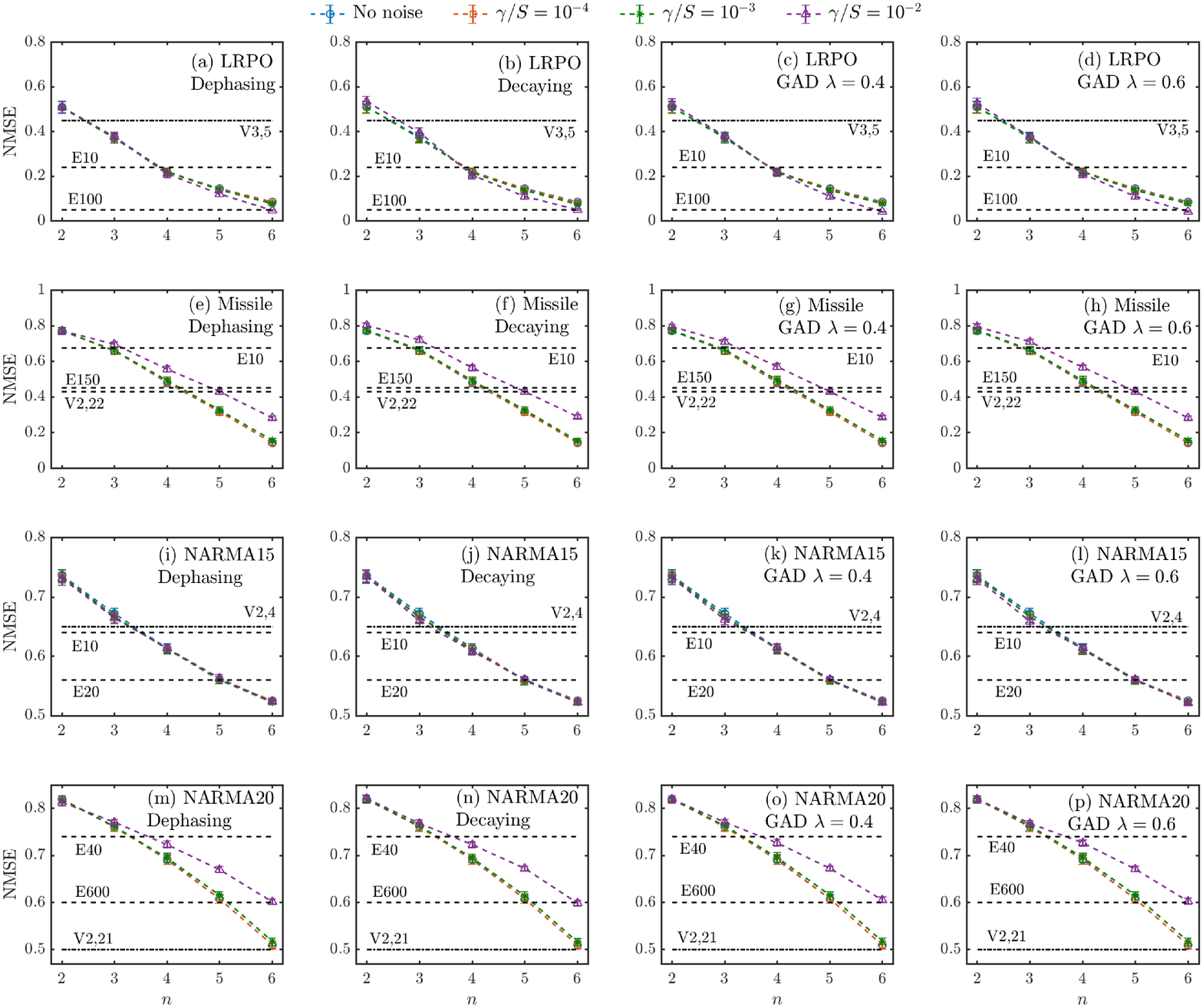}
\caption{Average SA NMSE for the LRPO, Missile, NARMA15 and NARMA20 tasks under decoherence. For comparison, the average SA NMSE without the effect of noise is also plotted. In all plots, the error bars represent the standard error}
\label{figure:noise}
\end{figure}

\subsection{Overview of SA learning performance} \label{subsec:overview}
We present an overview of SA performance in learning the LRPO, Missile, NARMA15 and NARMA20 tasks. The degree of the multivariate polynomial output Eq.~\eqref{eq:mv-out} is fixed to be $R=1$, so that the number of computational is $n+1$ for each $n$. Fig.~\ref{figure:trace} shows the typical SA outputs for the LRPO, Missile, NARMA15 and NARMA20 tasks during the evaluation phase. It is observed that the SA outputs follow the LRPO outputs closely, while SA is able to approximate the Missile and NARMA tasks relatively closely. For all computational tasks, as the number of system qubits $n$ increases, the SA outputs better approximate the target outputs. This is quantitatively demonstrated in Fig.~\ref{figure:overview}, which plots the average SA NMSE as $n$ increases.

From Fig.~\ref{figure:overview} we can see that the SA model with a small number of computational nodes performs comparably as ESNs and the Volterra series with a large number of computational nodes. For example, the average NMSE of $6$-qubit SA with $7$ computational nodes is comparable to the average NMSE of E$100$ with $101$ computational nodes in the LRPO task. On average, $5$-qubit SA with $6$ computational nodes performs better than V$2,22$ with 504 computational nodes in the Missile task. In the NARMA15 task, $4$-qubit SA with $5$ computational nodes outperforms V$2,4$ with $21$ computational nodes. In the NARMA20 task, $5$-qubit SA performs comparably as E600. Our results are similar to the performance of the QRCs with time multiplexing reported in \cite{fujii2017harnessing}, where the QRCs are demonstrated to perform comparably as ESNs with a larger number of trainable computational nodes. However, for the small number of qubits investigated, the rate of decrease in the average NMSE is approximately linear despite the dimension of the Hilbert space increases exponentially as $n$ increases. For both the NARMA tasks, the average NMSEs for 2-qubit and 6-qubit SA are of the same order of magnitude.  A larger number of additional system qubits are required to substantially reduce the SA task error.

\subsection{SA performance under decoherence}
\label{subsec:SA_decoherence}
We further validate the feasibility of the SA model in the presence of the dephasing, decaying noise and the generalized amplitude damping (GAD) channel. We simulate the noise by applying the Trotter-Suzuki formula \cite{trotter1959product,suzuki1971relationship}, in which we divide the normalized time interval $\tau S=1$ into $50$ small time intervals $\delta_{t} = \tau S/50$, and alternatively apply the unitary interaction and the Kraus operators $\{M_{l}^{(j)}(\frac{\gamma}{S})\}_{l}$ of each noise type, each for a time duration of $\delta_{t}$. Each of the $l$-th Kraus operator is applied for all system and ancilla qubits $j= 1, \ldots, n+1$, and $\gamma/S$ denotes the noise strength. For all noise types, we apply the same noise strengths $\gamma/S = \{10^{-4}, 10^{-3}, 10^{-2}\}$, which are within the experimentally feasible range for systems like NMR ensembles \cite{vandersypen2001experimental} and some current superconducting NISQ machines \cite{IBMQ20}. Under the dephasing noise, the density operator $\rho$ of the system and ancilla qubits evolves according to $\rho \rightarrow \frac{1 + e^{-2 \frac{\gamma}{S} \delta_{t}}}{2} \rho + \frac{1 - e^{-2 \frac{\gamma}{S} \delta_{t} }}{2} Z^{(j)} \rho Z^{(j)}$, such that the diagonal elements in $\rho$ remain invariant while the off-diagonal elements decay. The GAD channel gives rise to the evolution $\rho \rightarrow \sum_{l=0}^{3} M_{l}^{(j)}(\frac{\gamma}{S},\lambda) \rho (M_{l}^{(j)}(\frac{\gamma}{S},\lambda))^{\dagger}$, where $\dagger$ denotes the adjoint, and the Kraus operators $M_{l}^{(j)}(\frac{\gamma}{S}, \lambda)$ $(l=0,1,2,3)$ depend on an additional finite temperature parameter $\lambda \in [0, 1]$ \cite{nielsen2002quantum}. When $\lambda=1$, we recover the amplitude damping channel (the decaying noise), which takes a mixed state into the pure ground state $|0\rangle \langle 0|$ in the long time limit. For $\lambda \neq 1$, we investigate the SA task performance under the GAD channel for $\lambda = \{0.2, 0.4, 0.6, 0.8\}$. The GAD channel affects both the diagonal and off-diagonal elements of the density operator.

Fig.~\ref{figure:noise} plots the average SA NMSE under the dephasing, decaying and GAD with $\lambda=\{0.4, 0.6\}$ for all noise strengths. See Appendix~\ref{app:decoherence} for the average NMSE under the GAD channel for all chosen temperature parameters. Fig.~\ref{figure:noise} indicates that for the same noise strength, different noise types affect the SA task performance in a similar manner. For noise strengths $\gamma/S = \{10^{-4}, 10^{-3}\}$, all noise types do not significantly degrade SA task performance for the computational tasks. However, the impact of the noise strength $\gamma/S = 10^{-2}$ is more pronounced, particularly for a larger number of system qubits.

\begin{figure}[!htb]
\centering
\includegraphics[trim={40mm 5mm 15mm 10mm}, clip, scale=0.45]{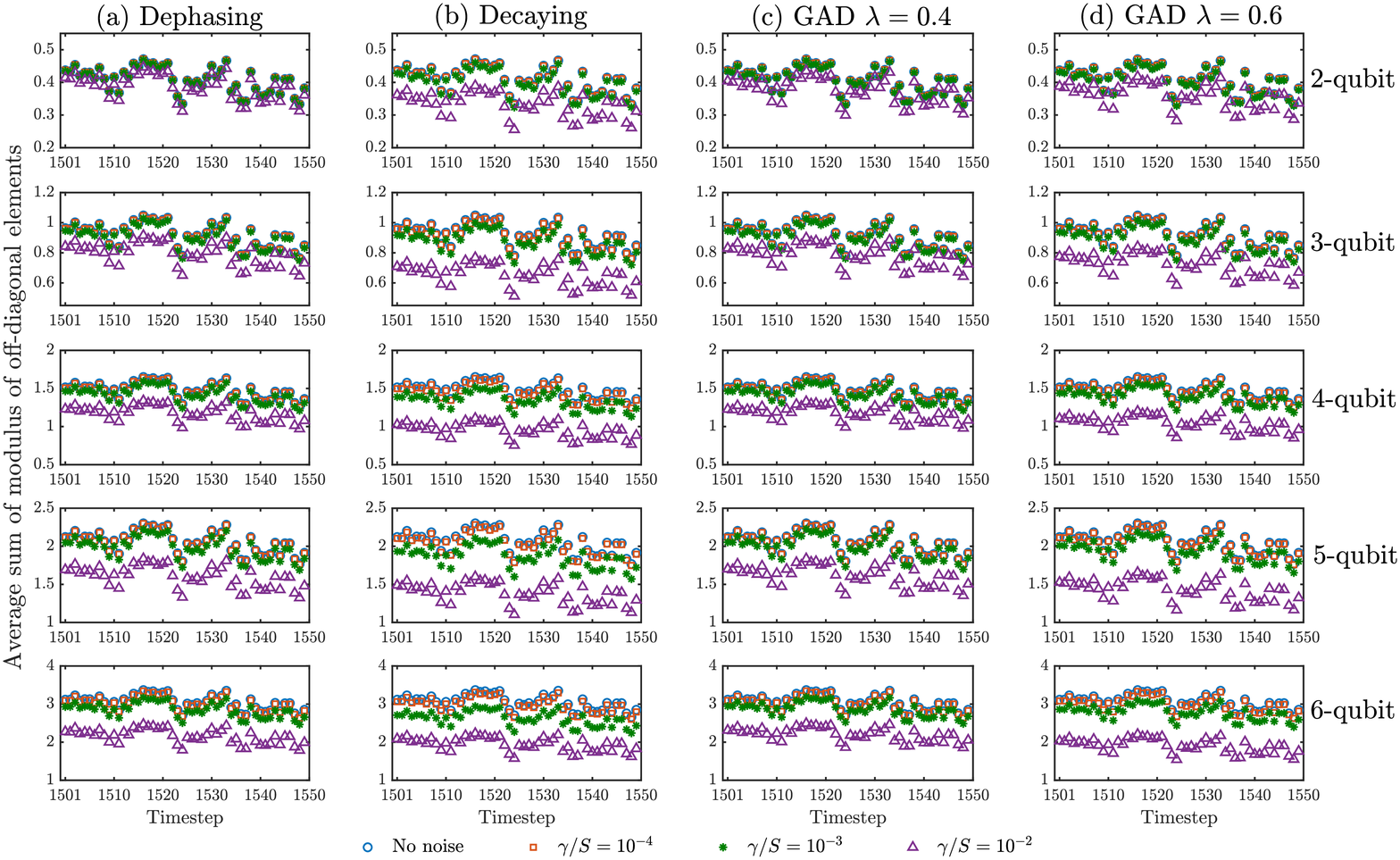}
\caption{Average sum of complex modulus of off-diagonal elements in the system density operator for timesteps 1501-1550, under the (a) dephasing noise, (b) decaying noise, (c) GAD with $\lambda=0.4$ and (d) GAD with $\lambda=0.6$. Row $n-1$ in the figure corresponds to the average sum for $n$-qubit SA}
\label{fig:eof}
\end{figure}

Changes in the SA task error under the effect of the decaying noise and the GAD channel are anticipated, since the expectations $\langle Z^{(j)} \rangle_{k}$ in the output depend on the diagonal elements of the system density operator, which are affected by both of these noise types. However, the SA task performance is also affected by the dephasing noise, which does not change the diagonal elements. A possible explanation for this behavior is a loss of degrees of freedom, in the sense that off-diagonal elements of the density operator become smaller and the density operator looks more like a classical probability distribution. Alternatively, this could be viewed as the off-diagonal elements contributing less to the overall computation. To support this explanation, for the dephasing, decaying and the GAD with $\lambda=\{0.4, 0.6\}$, and for each $n$, we sum the complex modulus of off-diagonal elements in the system density operator for the 100 $n$-qubit SA samples simulated above. The average of these 100 sums is plotted for the first 50 timesteps during the evaluation phase in Fig.~\ref{fig:eof}. That is, Fig.~\ref{fig:eof} plots $\frac{2}{n_s} \sum_{l=1}^{n_s} \sum_{r=1}^{2^n} \sum_{s=r+1}^{2^n} |\rho_{k,rs}^{(l)}|$, where $n_s = 100$ is the number of different random SA samples. Here $\rho_{k,rs}^{(l)}$ denotes the element of $\rho_{k}^{(l)}$ in row $r$ and column $s$ (the superscript $(l)$ indexing the SA sample).

Fig.~\ref{fig:eof} shows that as the noise strength increases, the average sum decreases, particularly with the noise strength $\gamma/S = 10^{-2}$. Similar trends are observed for the GAD channel for all the temperature parameters chosen, and the observed trend for the average sum persists as the timestep increases to 2500 (see Appendix~\ref{app:decoherence}). The results presented in Fig.~\ref{fig:eof} further indicate that though the output of SA depends solely on the diagonal elements of the density operator, nonzero off-diagonal elements are crucial for improving the SA emulation performance. This provides a plausible explanation for the improved performance achieved by increasing the number of qubits, thereby increasing Hilbert space size and the number of non-zero off-diagonal elements. Further investigation into this topic is presented in Sec.~\ref{subsec:further_comparison}.

\subsection{Effect of different input encodings}
Our proposed universal class encodes the input $u_{k} \in [0, 1]$ into the mixed state $\rho_{i_{0}, k} = u_{k}| 0 \rangle \langle 0 | + (1-u_{k}) | 1 \rangle \langle 1 |$. Other input encoding possibilities include
the pure state $\rho_{i_{0}, k} = (\sqrt{u_{k}} | 0 \rangle + \sqrt{1-u_{k}} |1 \rangle) (\sqrt{u_{k}} \langle 0 | + \sqrt{1-u_{k}} \langle 1 |)$ used in the QRC model \cite{fujii2017harnessing}, encoding the input into the phase $\rho_{i_{0}, k} = \frac{1}{2} (|0 \rangle + e^{-i u_{k}} |1 \rangle)(\langle 0 | + e^{i u_{k}} \langle 1 |)$, and encoding the input into non-orthogonal basis state $\rho_{i_{0}, k} = u_{k} | 0 \rangle \langle 0| + \frac{1-u_{k}}{2} (| 0 \rangle + | 1 \rangle) (\langle 0 | + \langle 1 |)$. We denote these different input encodings as mixed, pure, phase and non-orthogonal. We emphasize that for the last three encodings the universality of the associated dissipative quantum system using these encodings has not been proven. 

To investigate the impact of input encodings on the computational capability of quantum systems, the Hamiltonian parameters for all quantum systems simulated here are sampled from the same uniform distribution, and the resulting Hamiltonians are chosen if the associated CPTP map that implements the specified input-dependent density operator $\rho_{i_{0}, k}$ is convergent. We again test the convergence property numerically by checking if $50$ randomly generated initial density operators converge to the same density operator within $500$ timesteps. The number of system qubits and the number of computational nodes for all input encodings are the same. For each input encoding, NMSEs of $100$ convergent quantum systems are averaged for analysis. Fig.~\ref{figure:input_encoding} shows that for all computational tasks, the mixed state encoding performs better than other encodings. However, the average NMSE for different input encodings for all computational tasks are of the same order of magnitude. Moreover, as the number of system qubits increases, the errors of different input encodings decrease at roughly the same rate. This comparison indicates that the effect of different input encodings on the learning performance does not appear significant.

\begin{figure}[htb!]
\centering
\includegraphics[trim={40mm 0mm 35mm 5mm}, clip, scale=0.43]{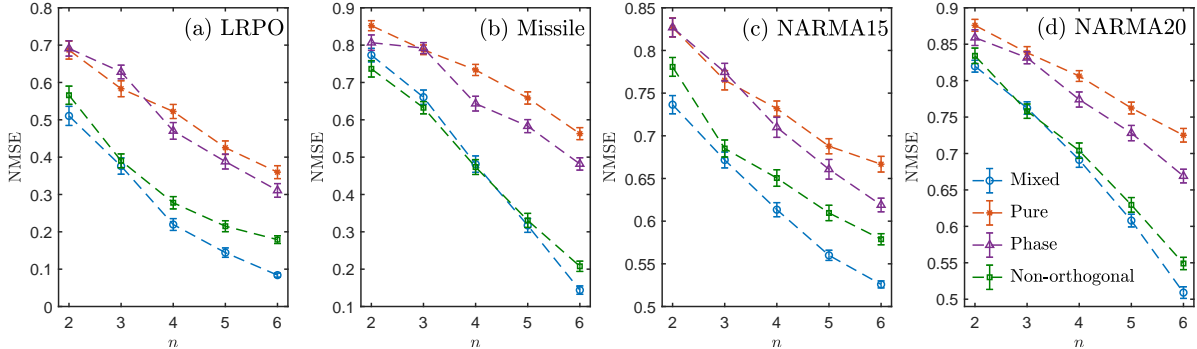}
\caption{Average NMSE for different input encodings, for approximating the (a) LRPO, (b) Missile (c) NARMA15 and (d) NARMA20 tasks. Error bars represent the standard error}
\label{figure:input_encoding}
\end{figure}

\subsection{Further comparison with ESNs} \label{subsec:further_comparison}
Our numerical results so far and the results shown in \cite{fujii2017harnessing} both suggest that dissipative quantum systems with a small number of qubits achieve comparable performance to classical learning schemes with a large number of computational nodes. However, these comparisons may appear to be skewed favorably toward quantum dynamical systems, since it does not address their exponential state-space size. One can, for example, also increase the state-space size of ESNs and the number of computational nodes of SA, such that the state-space size and the number of computational nodes are similar for both models. Here we present a further comparison between the SA model and ESNs, and provide insights into the possible advantage the SA model might offer over its classical counterpart. 

We focus on 4-qubit SA with a state-space size of $256$. Setting $R=6$ in Eq.~\eqref{eq:mv-out}, the number of computational nodes for SA is $210$. We compare this 4-qubit SA model's average task performance with the average E256 task performance in approximating the LRPO, Missile, NARMA15, NARMA20, NARMA30 and NARMA40 tasks. Here, the number of computational nodes for E256 is 257 and the average NMSE of 100 convergent E256s is reported. As shown in Table~\ref{table:q4-comparison}, for the Missile and all the NARMA tasks, the average NMSEs for both models are of the same order of magnitude, while E256 outperforms SA in the LRPO task. This comparison suggests that when the state-space size and the number of computational nodes for both models are similar, ESNs can outperform the SA model.

\begin{table}
\caption{Average $4$-qubit SA and E256 NMSE for the LRPO, Missile, NARMA15, NARMA20, NARMA30 and NARMA40 tasks. Results are rounded to two significant figures. The notation ($\pm$ se) denotes the standard error}
\label{table:q4-comparison}
\begin{tabular}{ccc}
\hline\noalign{\smallskip}
Task     & SA NMSE ($\pm$ se)    &  E256 NMSE ($\pm$ se) \\
\noalign{\smallskip}\hline\noalign{\smallskip}
LRPO     &  $0.20 \pm  1.5 \times 10^{-2}$   &  $0.019  \pm 7.7 \times 10^{-4}$ \\
Missile  &  $0.48 \pm  2.2 \times 10^{-2}$   &  $0.49   \pm 3.3 \times 10^{-3}$ \\
NARMA15  &  $0.61 \pm  8.0 \times 10^{-3}$   &  $0.32   \pm 1.6 \times 10^{-4}$ \\
NARMA20  &  $0.68 \pm  1.0 \times 10^{-2}$   &  $0.67   \pm 3.2 \times 10^{-4}$ \\
NARMA30  &  $0.67 \pm  7.1 \times 10^{-3}$   &  $0.67   \pm 4.0 \times 10^{-4}$ \\
NARMA40  &  $0.64 \pm  5.3 \times 10^{-3}$   &  $0.66   \pm 5.9 \times 10^{-4}$ \\
\noalign{\smallskip}\hline
\end{tabular}
\end{table}

We further investigate under what conditions SA might offer a computational advantage. We observe that while the number of computational nodes is kept constant, increasing the state-space size of SA induces a considerable computational improvement. To demonstrate this, $4$-, $5$- and $6$-qubit SA samples are simulated to perform all computational tasks mentioned above. For each $n$-qubit SA, we vary its output degree $R$ such that its number of computational nodes ranges from 5 to 252. The chosen degrees for $4$-qubit SA are $R_{4} = \{1, \ldots, 6\}$, for $5$-qubit are $R_{5} = \{1, \ldots, 5\}$, and for $6$-qubit SA are $R_{6}=\{1, \ldots, 4\}$. For each $n$-qubit SA, the Hamiltonians are the same for all its chosen output degrees, and the task errors of 100 convergent SA samples are averaged for analysis.

For comparison, we simulate 100 convergent ESNs with reservoir size $256$ to perform the same tasks. For $n$-qubit SA, let $\mathcal{N}_{n}$ $(n=4, 5, 6)$ denote the numbers of computational nodes corresponding to its output degrees $R_{n}$. The number of computational nodes $\mathcal{C}$ for E256 is set to be elements in the set $\mathcal{N}_{4} \cup \mathcal{N}_{5} \cup \mathcal{N}_{6}$. We first optimize $257$ output weights for E256 via standard least squares during the training phase. When $\mathcal{C}<257$ for E256, we select $\mathcal{C}-1$ computational nodes (excluding the tunable constant computational node) with the largest absolute values and their corresponding state elements. These $\mathcal{C}-1$ state elements are used to re-optimize $\mathcal{C}$ computational nodes (including the tunable constant computational node) via standard least squares. During the evaluation phase, $256$ state elements evolve; only $\mathcal{C}-1$ state elements and $\mathcal{C}$ output weights are used to compute the E256 output.

\begin{figure}[ht!]
\begin{center}
\includegraphics[trim={30mm 15mm 35mm 10mm}, clip, scale=0.43]{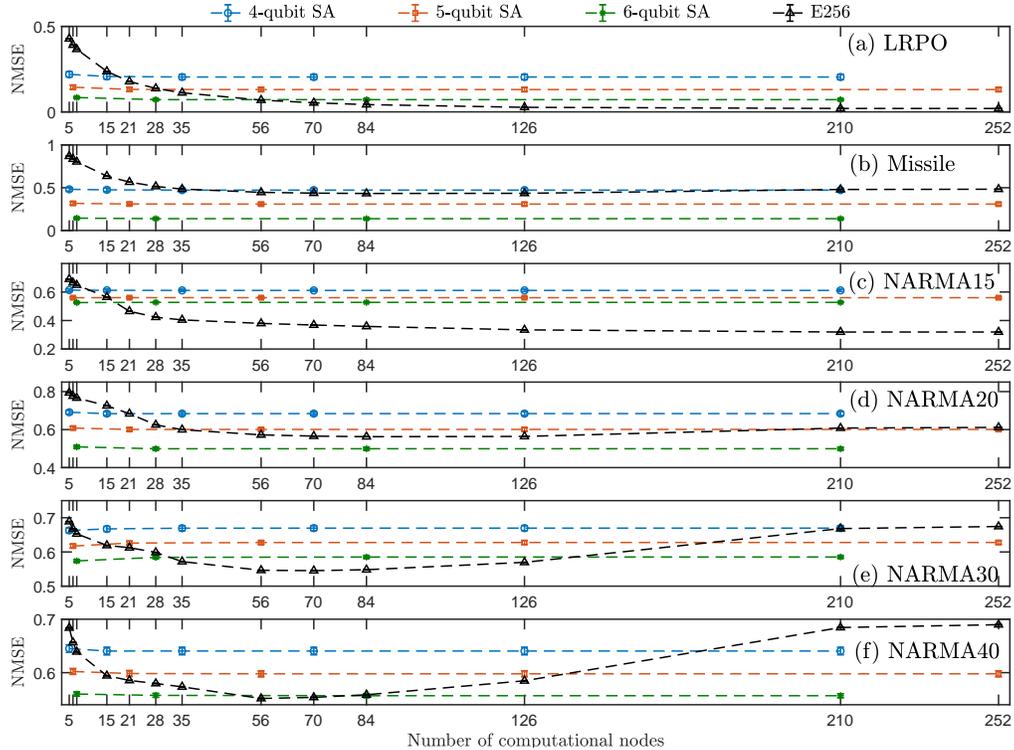}
\caption{Average SA NMSE as the state-space size and the number of computational nodes vary for all computational tasks. The average NMSE for E256 with the same number of computational nodes is plotted for comparison. The data symbols obscure the error bars, which represent the standard error}
\label{figure:SA_SS}
\end{center}
\end{figure}

\begin{figure}[ht!]
\begin{center}
\includegraphics[trim={25mm 15mm 30mm 10mm}, clip, scale=0.47]{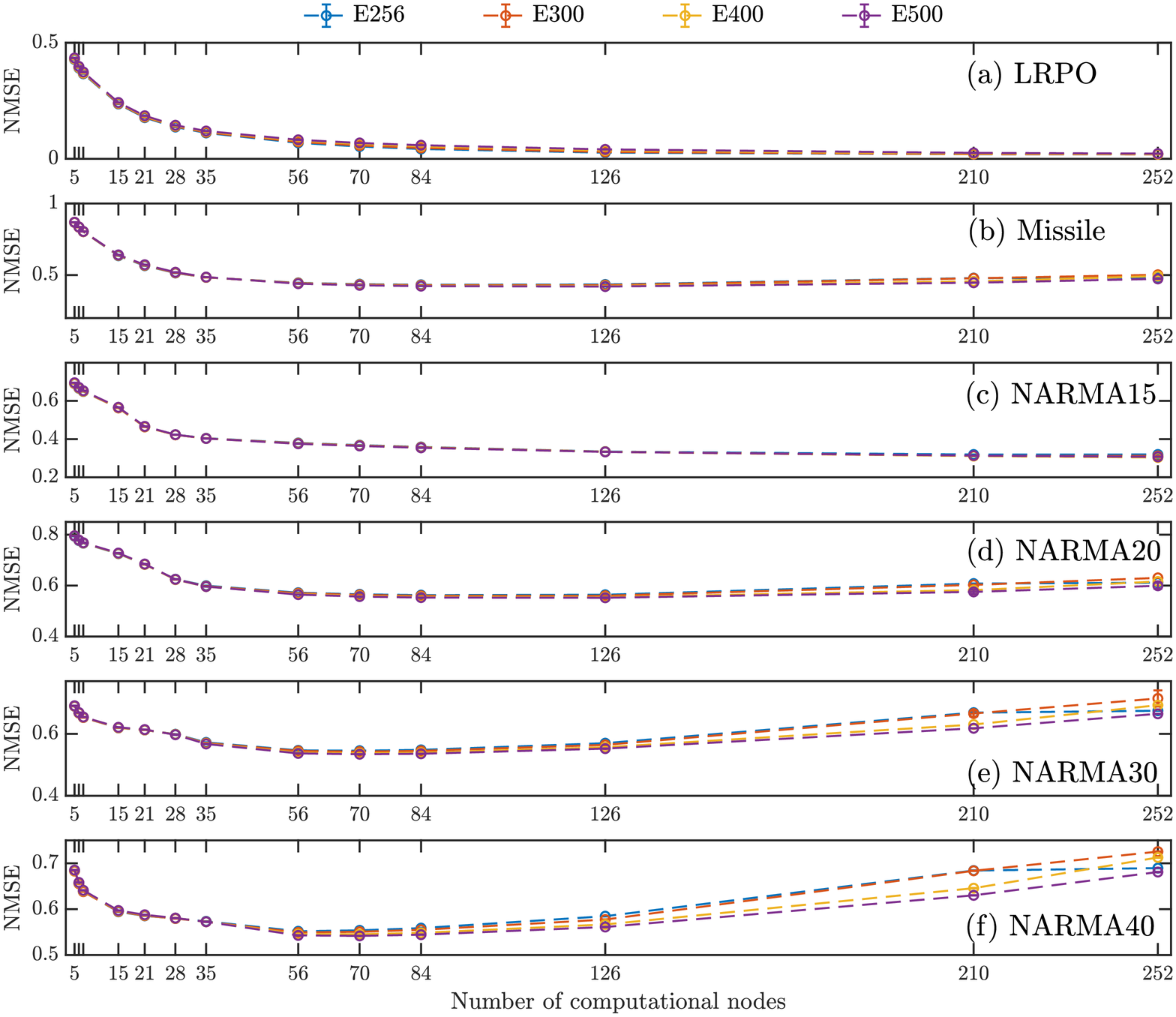}
\caption{Average ESNs NMSE as the state-space size and the number of computational nodes vary for all computational tasks. The data symbols obscure the error bars, which represent the standard error}
\label{figure:ESN_SS}
\end{center}
\end{figure}

Fig.~\ref{figure:SA_SS} plots the 4-, 5-, and 6-qubit SA average NMSE as the number of computational nodes increases for all computational tasks. For comparison, the average E256 NMSE is also plotted. Two important observations are that increasing the number of computational nodes does not necessarily improve SA task performance, while increasing the state-space size induces a noticeable improvement. For example, for the NARMA20 task and 210 computational nodes, the average NMSE for $4$-qubit SA is 0.68 while the average NMSE for $6$-qubit SA is 0.48. When comparing to E256, we observe that for most tasks, despite 4-qubit SA might not perform better than E256, subsequent increases in the state-space size allow the SA model to outperform E256, without extensively increasing its number of computational nodes. 

Contrary to the above observations for the SA model, increasing the reservoir size of ESNs while keeping the number of computational nodes fixed does not induce a significant computational improvement. To numerically demonstrate this, the reservoir size of ESNs is further increased to \{300, 400, 500\}. For each reservoir size, the number of computational nodes is set to be the same as that of E256. These computational nodes are chosen and optimized by the same method described above for E256. We average the task errors of 100 convergent ESNs for each reservoir size.  As shown in Fig.~\ref{figure:ESN_SS}, noticeable performance improvements for ESNs are only observed as the number of computational nodes increases, but not as the reservoir size varies. Another observation is that for the NARMA30 and NARMA40 tasks, the error increases as the number of computational nodes for ESNs increases. This could be due to overfitting, a condition occurs when too many adjustable parameters are trained on limited training data \cite{friedman2001elements}. On the other hand, this observation is less significant for the SA model. It would be interesting to conduct further investigation into this behavior in future work.

The above observations have several implications. To improve the computational capability of the SA model, one can take advantage of the exponentially increasing state-space size of the Hilbert space while only optimizing a polynomial number of computational nodes. On the contrary, to improve the computational capability of ESNs, one needs to increase the number of computational nodes, which is bounded by the reservoir size. Therefore, enhancing emulation performance of ESNs inevitably requires the state-space size to be increased. In the situation where the state-space size increases beyond what classical computers can simulate in a reasonable amount of time and with reasonable resources (such as memory), the computational capability of ESNs saturates, whereas the computational capability of the SA model could be further improved by increasing the number of qubits in a linear fashion. In this regime, the SA model could provide a potential computational advantage over its classical counterpart. To further verify the feasibility of this hypothesis, the learning capability of the SA model would need to be evaluated for a larger number of qubits on a physical quantum system. A possible implementation of this experiment is on NMR ensembles, as suggested in \cite{fujii2017harnessing}. However, motivated by the availability of NISQ machines, a quantum circuit implementation of the SA model, using the schemes proposed in \cite{BvH09,Gross18}, would be more attractive. This is another topic of further research continuing from this work.

\section{Discussion}
\label{sec:discussion}
We discuss the feasibility of realizing the proposed universal scheme in Section \ref{sec:universal_class} on the current most scalable NISQ quantum computers, such as quantum computers based on superconducting circuits or ion traps. We consider those  that implement the quantum circuit model.  Simulating the unitary interaction given by the Ising Hamiltonian $H_{K}$ on a quantum circuit requires decomposition of the evolution using the Trotter-Suzuki product formula \cite{trotter1959product,suzuki1971relationship}. Such a decomposition may require the sequential application of a large number of gates on NISQ machines, limiting the implementability of this family on current NISQ machines due to severe decoherence. However, it  may be possible to engineer alternative families based on simpler unitary interactions between the subsystem and ancilla qubits  (not of the Ising type), using only a short sequence of single-qubit and two-qubit gates, such that the associated CPTP maps possess the convergence and fading memory properties. A general framework for constructing such unitary interactions is the subject of on-going and future work continuing from this paper. 

To realize the dissipative dynamics for a subsystem, we can construct a quantum circuit as follows. At each timestep $k$, the ancilla qubit is prepared as the input-dependent mixed state $\rho_{i_0, k}^{K}$. After the unitary interaction with the subsystem qubits, the partial trace over the ancilla qubit can be performed by a projective measurement on the ancilla qubit and discarding the measurement outcome. At the next time step $k+1$, the ancilla qubit is reset and prepared as $\rho_{i_0, k+1}^{K}$. To estimate the expectations $\langle Z^{(j)} \rangle_{k}$, we perform Monte Carlo estimation by running the circuit multiple times and measuring $Z^{(j)}$ at time $k$,  the average of measured results over these runs estimates $\langle Z^{(j)} \rangle_{k}$. If multiple NISQ machines can be run in parallel at the same time, the expectations $\langle Z^{(j)} \rangle_{k}$ can be estimated in real time. In this setting, the number of qubits required to implement a dissipative subsystem for temporal learning is $n_{K}+1$.

Some existing NISQ machines based on superconducting circuits are not capable of preparing mixed states or resetting qubits for reuse after measurement. To address the first limitation, we can approximate the ancilla input-dependent mixed state $\rho_{i_0, k}^{K}$ by Monte Carlo sampling. That is, we construct $M > 0$ quantum circuits as above, but for each circuit and at each timestep $k$, we prepare the ancilla qubit in $|0 \rangle$ with probability $u_k$ or in $|1 \rangle$ with probability $1-u_k$. We again remark that these $M$ quantum circuits can be run in parallel, and therefore computations could be performed in real time if multiple circuits can be run at the same  time. Not being able to reuse a qubit after a measurement means that each point in the input sequence must be encoded in a distinct qubit. This implies that the length of sequences that can be considered will be limited by the number of qubits available. Some of the qubits available will need to be assigned as the system qubits while all the other qubits as data carrying ancilla qubits. For instance, on a 20-qubit machine, one can use say 4 qubits as the system qubits and the remaining 16 qubits for carrying the input sequence. In this case the total input sequence length that can be processed for washout, learning and evaluation is only of length 16. Nevertheless, current high-performance quantum circuit simulators, such as the IBM Qiskit simulator (https://qiskit.org/) \cite{Qiskit}, are capable of simulating qubit reset and realistic hardware noise. We also anticipate that the qubit reset functionality on NISQ machines would be available in the near future, opening avenue for proof-of-principle experiments of the proposed scheme for input sequences of arbitrary length.

\section{Conclusion and outlook}
We have developed a general theory for learning arbitrary I/O maps with fading memory using dissipative quantum systems. The attractiveness of the theory studied here is that it allows a dissipative quantum system (that meets certain requirements but is otherwise arbitrary) to be combined with a classical processor to learn I/O maps from sample I/O sequences. We apply the theory to demonstrate a universal class of dissipative quantum systems that can approximate arbitrary I/O maps with fading memory. 

Numerical experiments indicate that even with only a small number of qubits and a simple linear output, this class can achieve comparable performance, in terms of the average normalized mean-squared error, to classical learning schemes such as ESNs and the Volterra series with a large number of tunable parameters. However, when the state-space sizes of the quantum subsystem and classical learning schemes are the same, and the number of computational nodes equals the number of nodes in the ESN plus one (for the constant term) and a similar number of the QRC, the quantum system does not demonstrate any computational advantage. Moreover, the numerical results for a small number of qubits indicate that increasing the dimension of the Hilbert space of the quantum system while fixing the number of computational nodes can still result in improved prediction performance on a number of benchmarking tasks, whereas increasing the state space of ESNs while fixing the computational nodes does not lead to any noticeable improvement. This strongly suggests that the possibly very large Hilbert space of the quantum subsystem presents a potential resource that can be exploited in this approach. That is, for state-space dimensions beyond what can be simulated on a conventional digital computer. It remains to be investigated if this resource can indeed lead to a provable performance advantage over conventional classical learning approaches, and the circumstances where this will be the case. 


\section{Appendix}

\subsection{The convergence property} \label{app:covergence}
Recall from the main text that for a compact subset $D \subseteq \mathbb{R}$ and $L>0$, $K_L(D)$ denotes the set of all real sequences $u=\{u_k\}_{k \in \mathbb{Z}}$ taking values in $D \cap  [-L,L]$. Let $K_{L}^{-}(D)$ and $K_{L}^{+}(D)$ be subsets of input sequences in $K_{L}(D)$  whose indices are restricted to $\mathbb{Z}^{-}=\{\ldots, -2, -1, 0\}$ and $\mathbb{Z}^{+} = \{1, 2, \ldots \}$, respectively. In the following, we write $T$ for both input-independent and input-dependent CPTP maps. As in the main text, we write $T(u_k)$ for a CPTP map that is determined by an input $u_k$, and $\| \cdot \|_{p}$ for any Schatten $p$-norm for $p \in [1, \infty)$. All dissipative quantum systems considered here are finite-dimensional. We now state the definition of a convergent CPTP map with respect to $K_{L}(D)$.

\begin{definition}[Convergence]
An input-dependent CPTP map $T$ is convergent with respect to $K_{L}(D)$ if there exists a sequence $\delta = \{\delta_k\}_{k > 0}$ with $\mathop{\lim}_{k \rightarrow \infty} \delta_{k} = 0$, such that for all $u = \{u_{k}\}_{k \in \mathbb{Z}^{+}} \in K_{L}^{+}(D)$ and any two density operators $\rho_{j, k}$ $(j = 1, 2)$ satisfying $\rho_{j, k} = T(u_{k})\rho_{j, k-1}$, it holds that $\left\| \rho_{1, k} - \rho_{2, k} \right\|_{2} \leq \delta_{k}$. We call a dissipative quantum system whose dynamics is governed by a convergent CPTP map a convergent system.
\end{definition}

The convergence property can be viewed as an extension of the mixing property for a noisy quantum channel described by an input-independent CPTP map \cite{richter1996ergodicity}. 

\begin{definition}[Mixing]
\label{defn:mixing-CPTP}
A $n$-qubit dissipative quantum system described by a CPTP map $T$ is mixing if for all $\rho_{0} \in \mathcal{D}(\mathbb{C}^{2^n})$, if there exists a unique density operator $\rho_{*}$ such that,
$$\mathop{\lim}_{k \rightarrow \infty} \left\| \left( \prod_{j=1}^{k} T(\rho_{0}) \right) - \rho_{*} \right\|_{2} = 0.$$
\end{definition}

We will see later that if an input-dependent CPTP map $T(u_{k})$ satisfies the sufficient condition in Theorem~\ref{theorem:convergence}, then $T(u_k)$ is mixing for each $u_k \in D \cap [-L, L]$.

\begin{theorem}[Convergence property]
\label{theorem:convergence}
A $n$-qubit dissipative quantum system governed by an input-dependent CPTP map $T$ is convergent with respect to $K_{L}(D)$ if, for all $u_{k} \in D \cap [-L, L]$, $T(u_k)$ on the hyperplane $H_{0}(2^n)$ of $2^n \times 2^n$ traceless Hermitian operators satisfies $\|T(u_k)\rvert_{H_{0}(2^n)} \|_{2-2} \coloneqq \sup_{A \in H_{0}(2^n), A \neq 0} \frac{\| T(u_k) A\|_{2}}{\|A\|_{2}} \leq 1 - \epsilon$ for some $0 < \epsilon \leq 1$. Moreover, any pair of initial density operators converge uniformly to one another under $T$.
\end{theorem}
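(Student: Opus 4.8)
The plan is to exploit the fact that the difference of any two density operators lies in the hyperplane $H_{0}(2^n)$ of traceless Hermitian operators, that this hyperplane is invariant under every CPTP map, and that consequently the dynamics of the difference is governed entirely by the restricted maps $T(u_k)\rvert_{H_{0}(2^n)}$, on which the hypothesis supplies a uniform contraction factor $1-\epsilon$. The whole argument is then a contraction estimate on a finite-dimensional invariant subspace.

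First I would record the structural facts. For any two density operators $\rho_1,\rho_2 \in \mathcal{D}(\mathbb{C}^{2^n})$, the difference $\rho_1-\rho_2$ is Hermitian and traceless, hence an element of $H_{0}(2^n)$. Since a CPTP map $T(u_k)$ is linear, trace preserving, and Hermiticity preserving, it sends density operators to density operators and therefore maps $H_{0}(2^n)$ into itself, so $H_{0}(2^n)$ is an invariant subspace for each $T(u_k)$. Writing $\Delta_k = \rho_{1,k}-\rho_{2,k}$ and using linearity, the difference obeys the same recursion $\Delta_k = T(u_k)\Delta_{k-1}$ with $\Delta_0 = \rho_{1,0}-\rho_{2,0} \in H_{0}(2^n)$, so that $\Delta_k = \left( \overleftarrow{\prod}_{j=1}^{k} T(u_j) \right)\Delta_0$ and, by the invariance just established, every intermediate $\Delta_j$ remains in $H_{0}(2^n)$.

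Next I would chain the induced-norm bound. Because each $\Delta_{j-1}\in H_{0}(2^n)$, the hypothesis gives $\|\Delta_j\|_{2} = \|T(u_j)\rvert_{H_{0}(2^n)}\Delta_{j-1}\|_{2} \leq (1-\epsilon)\|\Delta_{j-1}\|_{2}$, and composing these inequalities yields $\|\Delta_k\|_{2} \leq (1-\epsilon)^{k}\|\Delta_0\|_{2}$. To control the initial term I would use that any density operator has Schatten $2$-norm at most one: its eigenvalues are nonnegative and sum to one, so ${\rm Tr}(\rho^2) \leq {\rm Tr}(\rho) = 1$, giving $\|\rho\|_{2}\leq 1$ and hence $\|\Delta_0\|_{2} \leq \|\rho_{1,0}\|_{2} + \|\rho_{2,0}\|_{2} \leq 2$. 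Therefore $\|\rho_{1,k}-\rho_{2,k}\|_{2} \leq 2(1-\epsilon)^{k}$. Setting $\delta_k = 2(1-\epsilon)^{k}$, which is a null sequence independent of both the input sequence and the two initial operators, verifies the definition of convergence and simultaneously establishes the uniformity claim.

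I do not anticipate a serious technical obstacle here, since the estimate is a standard contraction argument. The single step that genuinely carries the content — and which I would state with care — is the invariance of $H_{0}(2^n)$ under each $T(u_k)$ together with the observation that the restriction of the time-ordered composition equals the composition of the restrictions; this is precisely what licenses the submultiplicative chaining of the bounds $\|T(u_j)\rvert_{H_{0}(2^n)}\|_{2-2}\leq 1-\epsilon$. Everything else, namely the linearity of the recursion for $\Delta_k$ and the eigenvalue bound $\|\rho\|_{2}\leq 1$, is routine.
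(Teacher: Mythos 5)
Your proof is correct and follows essentially the same route as the paper's: both arguments observe that the difference of the two density operators is a traceless Hermitian operator, chain the bound $\|T(u_j)\rvert_{H_{0}(2^n)}\|_{2-2}\leq 1-\epsilon$ over the time-ordered composition, and bound the initial difference by $2$ via the triangle inequality and $\|\rho\|_{2}\leq 1$, arriving at $\|\rho_{1,k}-\rho_{2,k}\|_{2}\leq 2(1-\epsilon)^{k}$. Your explicit justification of the invariance of $H_{0}(2^n)$ under each $T(u_k)$ — which the paper uses implicitly when it restricts the composed map to $H_{0}(2^n)$ — is a welcome clarification rather than a departure.
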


\begin{proof}
Let $\rho_{1, 0}$ and $\rho_{2, 0}$ be two arbitrary initial density operators, $\rho_{1, 0} - \rho_{2, 0}$ is a traceless Hermitian operator. We have,
\begin{equation*}
\begin{split}
\left\| \rho_{1, k} - \rho_{2, k} \right\|_{2} & =  \left\| \left( \overleftarrow{\prod}_{j=1}^{k} T(u_{j}) \right) (\rho_{1, 0} - \rho_{2, 0}) \right\|_{2} \\
& = \left\| \left( \overleftarrow{\prod}_{j=1}^{k} T(u_{j}) \rvert_{H_{0}(2^n)} \right) (\rho_{1, 0} - \rho_{2, 0}) \right\|_{2} \\
& \leq  \overleftarrow{\prod}_{j=1}^{k}  \left\|T(u_{j}) \rvert_{H_{0}(2^n)} \right\|_{2-2} \left\| \rho_{1, 0} - \rho_{2, 0} \right\|_{2} \\
& \leq \overleftarrow{\prod}_{j=1}^{k} (1-\epsilon) \left\| \rho_{1, 0} - \rho_{2, 0} \right\|_{2} \\
& \leq \overleftarrow{\prod}_{j=1}^{k} (1 - \epsilon) (\left\| \rho_{1, 0} \right\|_{2} + \left\|\rho_{2, 0} \right\|_{2}) \\
& \leq  2(1-\epsilon)^{k}, 
\end{split}
\end{equation*}
where the last inequality follows from the fact that for all $\rho \in \mathcal{D}(\mathbb{C}^{2^n})$, $\left\|\rho \right\|_{2} \leq 1$.
\end{proof}

We remark that for a $n$-qubit dissipative quantum system that satisfies the condition in Theorem~\ref{theorem:convergence}, any initial density operator $\rho_{0}$ reaches the state $\lim_{k \rightarrow \infty} \left( \overleftarrow{\prod}_{j=1}^{k}T(u_j) \right) \left(\frac{I}{2^{n}}\right)$. To see this, let 
$$
\rho_{0} = \frac{I}{2^n} + \sum_{\substack{j_1, j_2,\ldots,j_n = \{0, 1, 2, 3\} \\ j_1j_2\ldots j_n \neq 0}} \alpha_{j_1j_2\ldots j_n}\bigotimes_{i=1}^{n} \sigma^{(i)}_{j_i},
$$
where $\sigma^{(i)}_{j_{i}}$ denotes, for qubit $i$, the identity operator $I$ if $j_{i}=0$, the Pauli X operator if $j_{i} = 1$, the Pauli Y operator if $j_{i}=2$ and the Pauli Z operator if $j_{i}=3$. Since $\bigotimes_{i=1}^{n} \sigma^{(i)}_{j_i}$ for $j_1j_2\ldots j_n \neq 0$ are all traceless Hermitian operators, therefore as $k \rightarrow \infty$, 
\begin{equation*}
\begin{split}
\left\| \rho_{k} - \left( \overleftarrow{\prod}_{j=1}^{k} T(u_j) \right) \left( \frac{I}{2^n} \right) \right\|_{2} \rightarrow 0.
\end{split}
\end{equation*}

\subsection{The universality property} \label{app:universality}
We now show the universality property of convergent dissipative quantum systems. Let $\mathbb{R}^{\mathbb{Z}}$ be the set of all real-valued infinite sequences. Consider a $n$-qubit convergent dissipative quantum system described by Eqs.~\eqref{eq:rho-dynamics} and \eqref{eq:y-dynamics} in the main text, whose dynamics and output are defined by a CPTP map $T$ and a functional $h: \mathcal{D}(\mathbb{C}^{2^n}) \rightarrow \mathbb{R}$, respectively. We associate this quantum system with an induced filter $M^{T}_{h}: K_{L}(D) \rightarrow \mathbb{R}^{\mathbb{Z}}$, such that for any initial condition $\rho_{-\infty} \in \mathcal{D}(\mathbb{C}^{2^n})$, when evaluated at time $t = k \tau$,
$$
M_{h}^{T}(u)_{k} =  h\left( \left( \overrightarrow{\prod}_{j=0}^{\infty} T(u_{k-j}) \right) \rho_{-\infty} \right),
$$
where $\overrightarrow{\prod}_{j=0}^{\infty} T(u_{k-j}) = \mathop{\lim}_{N \rightarrow \infty} \overleftarrow{\prod}_{j=0}^{N} T(u_{k+(j-N)})=\lim_{N \rightarrow \infty} T(u_{k}) T(u_{k-1}) \cdots T(u_{k-N})$, and the limit is a pointwise limit. Lemma~\ref{lemma:limit} states that this limit is well-defined.

\begin{lemma}
\label{lemma:limit}
The filter $M_{h}^{T}: K_{L}(D) \rightarrow \mathbb{R}^{\mathbb{Z}}$ is well-defined. In particular, the limit $\lim_{N \rightarrow \infty} T(u_{k})T(u_{k-1}) \cdots T(u_{k-N})\rho_{-N}$ exists and is independent of $\rho_{-N}$.
\end{lemma}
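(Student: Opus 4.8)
The plan is to establish the statement in three stages, all in the Schatten $2$-norm (convenient because Theorem~\ref{theorem:convergence} is phrased in terms of $\|\cdot\|_{2-2}$, and harmless since on the finite-dimensional space of $2^n\times 2^n$ Hermitian operators all norms are equivalent and the asserted pointwise limit coincides with norm convergence). First I would show that the sequence $\sigma_N \coloneqq T(u_k)T(u_{k-1})\cdots T(u_{k-N})\rho_{-N}$ is Cauchy, hence convergent by completeness of the Hermitian operators. Then I would show the limit is independent of the chosen initial operators $\rho_{-N}$. Finally I would check that the limit lies in $\mathcal{D}(\mathbb{C}^{2^n})$, so that applying $h$ to it is legitimate and $M_h^T(u)_k$ is unambiguously defined for each $k$.

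The crux of the existence argument is a grouping trick. For $M>N$ I would write
\begin{equation*}
\sigma_M = T(u_k)\cdots T(u_{k-N})\,\tilde\rho, \qquad \tilde\rho \coloneqq T(u_{k-N-1})\cdots T(u_{k-M})\,\rho_{-M}.
\end{equation*}
Since each $T(u_j)$ is CPTP, $\tilde\rho$ is again a density operator, so $\sigma_N$ and $\sigma_M$ arise by applying the \emph{same} product of $N+1$ maps $T(u_k)\cdots T(u_{k-N})$ to the two density operators $\rho_{-N}$ and $\tilde\rho$. Their difference $\rho_{-N}-\tilde\rho$ is traceless Hermitian, so the estimate inside the proof of Theorem~\ref{theorem:convergence} (composing $m$ maps each obeying the $2$-$2$ contraction hypothesis shrinks a traceless Hermitian input by $(1-\epsilon)^m$) yields
\begin{equation*}
\|\sigma_N-\sigma_M\|_2 \le (1-\epsilon)^{N+1}\,\|\rho_{-N}-\tilde\rho\|_2 \le 2(1-\epsilon)^{N+1},
\end{equation*}
using $\|\rho\|_2\le 1$ for any density operator. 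As $0<\epsilon\le 1$ forces $(1-\epsilon)^{N+1}\to 0$, the sequence $\{\sigma_N\}$ is Cauchy and converges.

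Independence follows from the same bound: for two families of initial operators $\{\rho_{-N}\}$ and $\{\rho'_{-N}\}$ with associated sequences $\sigma_N$ and $\sigma'_N$, one has $\|\sigma_N-\sigma'_N\|_2 \le (1-\epsilon)^{N+1}\|\rho_{-N}-\rho'_{-N}\|_2 \le 2(1-\epsilon)^{N+1}\to 0$, so both sequences share a common limit. Each $\sigma_N$ lies in $\mathcal{D}(\mathbb{C}^{2^n})$, which is a closed (indeed compact) subset of the Hermitian operators, so the limit is itself a density operator and $h$ applies to it; this makes the filter well-defined. I expect the only genuinely delicate point to be the grouping step, namely getting the time indices to line up so that the shared outer block of $N+1$ maps is identical for $\sigma_N$ and $\sigma_M$, which is what allows the contraction factor $(1-\epsilon)^{N+1}$ rather than a weaker one; once this bookkeeping is in place, and given the uniform bound $\|\rho_{-N}-\tilde\rho\|_2\le 2$ that holds irrespective of how the initial operators are chosen, the remainder is routine.
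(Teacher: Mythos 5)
Your proposal is correct and follows essentially the same route as the paper's proof: the same grouping trick writing $\sigma_M$ as the outer block of $N+1$ maps applied to a pushed-forward density operator $\tilde\rho$, the same contraction estimate $(1-\epsilon)^{N+1}\|\rho_{-N}-\tilde\rho\|_2 \le 2(1-\epsilon)^{N+1}$ on traceless Hermitian differences, and the same appeal to completeness in the Schatten $2$-norm. Your explicit verification of independence from $\rho_{-N}$ and of membership of the limit in $\mathcal{D}(\mathbb{C}^{2^n})$ merely spells out details the paper leaves implicit (it invokes completeness of $\mathcal{D}(\mathbb{C}^{2^n})$ directly), so there is no substantive difference.
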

\begin{proof}
The set $\mathcal{D}(\mathbb{C}^{2^n})$ equipped with the distance function induced by the norm $\| \cdot \|_{2}$ is a complete metric space. Therefore, every Cauchy sequence converges to a point in $\mathcal{D}(\mathbb{C}^{2^n})$ \cite{rudin1964principles}. It remains to show that $S_{n} = T(u_{k})T(u_{k-1}) \cdots T(u_{k-n}) \rho_{-n}$ is a Cauchy sequence. By hypothesis, for all $u_{k} \in D \cap [-L, L]$, $\| T(u_{k})\rvert_{H_{0}(2^{n})}\|_{2} \leq 1 - \epsilon$ for some $0 < \epsilon \leq 1$. For any $\epsilon' > 0$, let $N>0$ such that $(1-\epsilon)^{N} < \frac{\epsilon'}{2}$. Then for all $n, m > N$, suppose that $n \leq m$,
\begin{equation*}
\begin{split}
\|S_{n} - S_{m}\|_{2} & = \left\|T(u_{k})T(u_{k-1})\cdots T(u_{k-n}) \left(\rho_{-n} - T(u_{k-n-1})\cdots T(u_{k-m}) \rho_{-m}\right) \right\|_{2} \\
& \leq (1-\epsilon)^{n+1} \left(\|\rho_{-n}\|_{2} +\|  \left( T(u_{k-n-1})\cdots T(u_{k-m}) \right) \rho_{-m}\|_{2} \right) \\
& \leq 2(1-\epsilon)^{N} < \epsilon'
\end{split}
\end{equation*}
\end{proof}

This filter is causal since given $u, v \in K_{L}(D)$ satisfying $u_{\tau} = v_{\tau}$ for $\tau \leq k$, $M_{h}^{T}(u)_{k} = M_{h}^{T}(v)_{k}$. For any $\tau \in \mathbb{Z}$, let $M_{\tau}$ be the shift operator defined by $M_{\tau}(u)_{k} = u_{k-\tau}$. A filter is said to be time-invariant if it commutes with $M_{\tau}$. It is straightforward to show that $M_{h}^{T}$ is time-invariant. 

For a time-invariant and causal filter, there is a corresponding functional $F_{h}^{T}: K_{L}^{-}(D) \rightarrow \mathbb{R}$ defined as $F_{h}^{T}(u) = M_{h}^{T}(u)_0$ (see \cite{BC85}). The corresponding filter can be recovered via $M_{h}^{T}(u)_{k} = F_{h}^{T}(P \circ M_{-k}(u)) $, where $P$ truncates $u$ up to $0$, that is $P(u) = u\rvert_0$. We say a filter $M_{h}^{T}$ has the fading memory property if and only if $F_{h}^{T}$ is continuous with respect to a weighted norm defined as follows.

\begin{definition}[Weighted norm]
\label{defn:weighted_norm}
For a null sequence $w=\{w_{k}\}_{k \geq 0}$, that is $w: \{0\} \cup \mathbb{Z}^{+}  \rightarrow (0, 1]$ is decreasing and $\mathop{\lim}_{k \rightarrow \infty} w_{k} = 0$, define a weighted norm $\|\cdot\|_{w}$ on $K^{-}_{L}(D)$ as $\|u\|_{w} \coloneqq \mathop{\sup_{k \in \mathbb{Z}^{-}}} \left|u_{k} \right| w_{-k}$.
\end{definition}

\begin{definition}[Fading memory]
A time-invariant causal filter $M: K_{L}(D) \rightarrow \mathbb{R}^{\mathbb{Z}}$ has the fading memory property with respect to a null sequence $w$ if and only if its corresponding functional $F: K_{L}^{-}(D) \rightarrow \mathbb{R}$ is continuous with respect to the weighted norm $\|\cdot\|_{w}$.
\end{definition}

To emphasize that the fading memory property is defined with respect to a null sequence $w$, we will say that $M$ is a \textit{$w$-fading memory filter} and the corresponding functional $F$ is a \textit{$w$-fading memory functional}. We state the following compactness result \cite[Lemma 2]{grigoryeva2018universal} and the Stone-Weierstrass theorem \cite[Theorem 7.3.1]{Dieudonne13}.

\begin{lemma}[Compactness]
\label{lemma:compactness}
For any null sequence $w$, $K_{L}^{-}(D)$ is compact with the weighted norm $\| \cdot \|_{w}$.
\end{lemma}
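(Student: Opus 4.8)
The plan is to prove compactness of $(K_{L}^{-}(D), \|\cdot\|_{w})$ by establishing sequential compactness, which suffices because the weighted norm induces a metric and compactness is equivalent to sequential compactness in metric spaces. So I would take an arbitrary sequence $\{u^{(n)}\}_{n \geq 1} \subseteq K_{L}^{-}(D)$, where each $u^{(n)} = \{u^{(n)}_{k}\}_{k \in \mathbb{Z}^{-}}$ has entries in the compact set $D \cap [-L, L]$, and aim to extract a subsequence converging in $\|\cdot\|_{w}$ to some limit lying in $K_{L}^{-}(D)$.

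First I would produce a candidate limit by a diagonal argument. For each fixed index $k \in \mathbb{Z}^{-}$, the real sequence $\{u^{(n)}_{k}\}_{n}$ lies in the compact set $D \cap [-L,L]$ and hence has a convergent subsequence. Enumerating $\mathbb{Z}^{-}$ as $0, -1, -2, \ldots$ and successively passing to subsequences that converge at $k = 0$, then at $k = -1$, and so on, the diagonal subsequence $\{u^{(n_j)}\}_{j}$ satisfies $u^{(n_j)}_{k} \to u^{*}_{k}$ pointwise for every $k$. Since $D \cap [-L,L]$ is closed, each $u^{*}_{k} \in D \cap [-L, L]$, so the limit $u^{*} = \{u^{*}_{k}\}$ belongs to $K_{L}^{-}(D)$.

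The crucial step is to upgrade this pointwise convergence to convergence in the weighted norm, and this is where the decay $w_{k} \to 0$ is used. Fix $\epsilon > 0$ and split the quantity $\|u^{(n_j)} - u^{*}\|_{w} = \sup_{k \in \mathbb{Z}^{-}} |u^{(n_j)}_{k} - u^{*}_{k}| w_{-k}$ into a distant past and a recent past. For the distant past, since both entries lie in $[-L,L]$ we have $|u^{(n_j)}_{k} - u^{*}_{k}| \leq 2L$, and because $w$ is decreasing to zero I can choose $K$ large enough that $2L\, w_{K} < \epsilon/2$; then for all $-k \geq K$ the contribution is below $\epsilon/2$, uniformly in $j$. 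For the recent past, only the finitely many indices $-k < K$ remain, and pointwise convergence at each of these (together with $w_{-k} \leq 1$) gives $|u^{(n_j)}_{k} - u^{*}_{k}| w_{-k} < \epsilon/2$ for all $j$ sufficiently large. Combining the two estimates yields $\|u^{(n_j)} - u^{*}\|_{w} < \epsilon$ for large $j$, establishing convergence in the weighted norm and hence sequential compactness.

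The main obstacle is precisely this last step: pointwise convergence on its own is strictly weaker than weighted-norm convergence, and without the hypothesis that $w$ is a null sequence the distant-past tail could not be controlled uniformly in $j$. The essential mechanism is that the weight $w_{-k}$ suppresses the bounded but otherwise uncontrolled differences in the remote past, so that only finitely many recent coordinates need to be handled by the diagonal subsequence. This is exactly the fading-memory structure that renders the infinite-horizon input space compact, and it is what makes the subsequent appeal to the Stone--Weierstrass theorem legitimate.
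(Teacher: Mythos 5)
Your proof is correct. One point of comparison worth noting: the paper does not actually prove this lemma at all --- it imports it verbatim as Lemma~2 of the cited reference \cite{grigoryeva2018universal}, so your argument supplies a self-contained proof of a result the authors treat as external. The route taken in that reference is topological rather than sequential: one shows that on $K_{L}^{-}(D)$ the topology induced by $\|\cdot\|_{w}$ coincides with the product (pointwise-convergence) topology, and then compactness follows from Tychonoff's theorem applied to the product of the compact factors $D \cap [-L,L]$. Your diagonal-subsequence argument is the sequential counterpart of that same mechanism, and it is legitimate here precisely because $\|\cdot\|_{w}$ metrizes the space, so compactness and sequential compactness agree; the index set $\mathbb{Z}^{-}$ being countable is what makes the Cantor diagonal extraction work. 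What your version buys is elementarity (no appeal to Tychonoff, only Bolzano--Weierstrass in $\mathbb{R}$ applied coordinatewise), and your two-regime estimate --- the tail $-k \geq K$ killed uniformly in $j$ by $2L\,w_{K} < \epsilon/2$ using monotonicity of $w$, the finitely many recent coordinates handled by pointwise convergence with $w_{-k} \leq 1$ --- is exactly the step that identifies where the null-sequence hypothesis enters; this is the same place it enters in the topological proof, where $w_k \to 0$ is what makes the weighted-norm topology no finer than the product topology on the bounded set. Both proofs are equally rigorous; yours has the pedagogical advantage of making the fading-memory mechanism explicit, while the cited proof generalizes more readily to uncountable index sets where a diagonal argument is unavailable.
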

We write $(K_{L}^{-}(D), \|\cdot\|_{w})$ to denote the space $K_{L}^{-}(D)$ equipped with the weighted norm $\| \cdot \|_{w}$.

\begin{theorem}[Stone-Weierstrass]
\label{theorem:stone}
Let $E$ be a compact metric space and $C(E)$ be the set of real-valued continuous functions defined on $E$. If a subalgebra $A$ of $C(E)$ contains the constant functions and separates points of $E$, then $A$ is dense in $C(E)$.
\end{theorem}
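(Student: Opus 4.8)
The plan is to work with the uniform closure $\bar{A}$ of $A$ in $C(E)$ (with the sup-norm $\|\cdot\|_\infty$), which is itself a closed subalgebra containing the constants and separating points; proving $\bar{A} = C(E)$ is equivalent to the asserted density of $A$. The strategy is the classical lattice argument: first show $\bar{A}$ is a sublattice (closed under pointwise $\max$ and $\min$), then use the separation property together with the constants to interpolate prescribed values at pairs of points, and finally invoke compactness of $E$ twice to squeeze an arbitrary $g \in C(E)$ between functions in $\bar{A}$.

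The key step, and the technical heart of the argument, is to show that $\bar{A}$ is closed under the absolute value: if $f \in \bar{A}$ then $|f| \in \bar{A}$. Since $\max(f,g) = \tfrac{1}{2}(f+g+|f-g|)$ and $\min(f,g) = \tfrac{1}{2}(f+g-|f-g|)$, this immediately yields the sublattice property. To obtain $|f| \in \bar{A}$ without a circular appeal to the ordinary Weierstrass theorem, I would construct polynomials approximating $\sqrt{s}$ on $[0,1]$ explicitly via the recursion $p_0(s)=0$, $p_{n+1}(s)=p_n(s)+\tfrac{1}{2}(s-p_n(s)^2)$. An induction shows $0 \le p_n(s) \le p_{n+1}(s) \le \sqrt{s}$ on $[0,1]$, so this monotone sequence converges pointwise to $\sqrt{s}$, and by Dini's theorem the convergence is uniform. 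Normalizing $f$ so that $\|f\|_\infty \le M$, the function $f^2/M^2 \in \bar{A}$ takes values in $[0,1]$, hence $M\,p_n(f^2/M^2) \in \bar{A}$ converges uniformly to $|f|$, placing $|f|$ in the closed set $\bar{A}$.

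With the sublattice property in hand, I would use separation and the constants to produce, for any distinct $x,y \in E$ and any prescribed values $a,b$, a function $f_{xy} \in A$ with $f_{xy}(x)=a$ and $f_{xy}(y)=b$ (an affine combination of a separating function and the constant $1$). Then, fixing $g \in C(E)$ and $\epsilon>0$: for each $x$, choose $f_{xy}$ matching $g$ at $x$ and $y$; the open sets $\{z: f_{xy}(z) < g(z)+\epsilon\}$ cover $E$, so by compactness finitely many suffice, and their pointwise minimum $f_x \in \bar{A}$ satisfies $f_x < g+\epsilon$ everywhere with $f_x(x)=g(x)$. The open sets $\{z: f_x(z) > g(z)-\epsilon\}$ then cover $E$; extracting a finite subcover and taking the pointwise maximum produces $f \in \bar{A}$ with $g-\epsilon < f < g+\epsilon$, i.e. $\|f-g\|_\infty \le \epsilon$. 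Since $\bar{A}$ is closed, this forces $g \in \bar{A}$, and as $g$ was arbitrary, $\bar{A}=C(E)$.

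I expect the main obstacle to be the polynomial-approximation lemma for $|f|$: establishing the monotonicity and boundedness estimates for the recursion $p_n$ and justifying uniform convergence (via Dini) is where the real work lies, and it is precisely the ingredient that must be supplied directly rather than borrowed from the one-dimensional Weierstrass theorem in order to keep the argument non-circular. The two compactness extractions, by contrast, are routine once the sublattice structure and two-point interpolation are available.
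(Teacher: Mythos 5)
Your proof is correct, but there is nothing in the paper to compare it against: the paper does not prove this statement, it simply imports it as a known result from the cited reference (Dieudonn\'e, Theorem 7.3.1). Your argument — closure under absolute value via the iteration $p_{n+1}(s)=p_n(s)+\tfrac{1}{2}\bigl(s-p_n(s)^2\bigr)$ with the monotonicity induction (which follows from the identity $\sqrt{s}-p_{n+1}(s)=\bigl(\sqrt{s}-p_n(s)\bigr)\bigl(1-\tfrac{1}{2}(\sqrt{s}+p_n(s))\bigr)$) and Dini's theorem, then the lattice operations, two-point interpolation from separation plus constants, and the double compactness squeeze — is precisely the classical lattice-theoretic proof found in that reference, and all of its steps are sound.
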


Let $C(K_{L}^{-}(D), \| \cdot \|_{w})$ be the set of continuous functionals $F: (K_{L}^{-}(D),  \| \cdot \|_{w}) \rightarrow \mathbb{R}$. The following theorem is a result of the compactness of $(K_{L}^{-}(D), \| \cdot \|_{w})$ (Lemma \ref{lemma:compactness}) and the Stone-Weierstrass Theorem (Theorem \ref{theorem:stone}).

\begin{theorem}
\label{theorem:universal_dissipative}
Let $w$ be a null sequence. For convergent CPTP maps $T$, let $\mathcal{M}_{w} = \{M_{h}^{T}  \mid h : \mathcal{D}(\mathbb{C}^{2^{n}}) \rightarrow \mathbb{R}\}$ be a set of $w$-fading memory filters. Let $\mathcal{F}_{w}$ be the family of corresponding  $w$-fading memory functionals defined on $K_{L}^{-}(D)$. If $\mathcal{F}_{w}$ forms a polynomial algebra of $C(K_{L}^{-}(D), \| \cdot \|_{w})$, contains the constant functionals and separates points of $K_{L}^{-}(D)$, then $\mathcal{F}_{w}$ is dense in $C(K_{L}^{-}(D), \| \cdot \|_{w})$. 
That is for any $w$-fading memory filter $M_{*}$ and any $\epsilon > 0$, there exists $M_{h}^{T} \in \mathcal{M}_{w}$ such that for all $u \in K_{L}(D)$,  $\|M_{*}(u) - M_{h}^{T}(u)\|_{\infty} = \sup_{k \in \mathbb{Z}}|M_{*}(u)_{k} - M_{h}^{T}(u)_{k}| < \epsilon$.
\end{theorem}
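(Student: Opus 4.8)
The plan is to recognize this statement as a direct application of the Stone--Weierstrass theorem (Theorem~\ref{theorem:stone}) on the compact space provided by Lemma~\ref{lemma:compactness}, followed by a short argument that lifts an approximation of functionals to an approximation of filters using causality and time-invariance. Almost all of the analytic content has already been absorbed into the hypotheses and the preceding lemmas, so the proof is mainly a matter of checking that the abstract theorem applies and then unwinding the filter--functional correspondence.

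First I would fix the ambient space $E = (K_{L}^{-}(D), \|\cdot\|_{w})$, which is a compact metric space by Lemma~\ref{lemma:compactness}, and let $C(E)$ denote the real-valued continuous functionals on it. I would then verify the hypotheses of Theorem~\ref{theorem:stone} for $\mathcal{F}_{w}$: (i) $\mathcal{F}_{w} \subseteq C(E)$, since each element of $\mathcal{M}_{w}$ is by assumption a $w$-fading memory filter, so its corresponding functional $F_{h}^{T}$ is continuous with respect to $\|\cdot\|_{w}$ by the definition of fading memory; (ii) $\mathcal{F}_{w}$ is a subalgebra, because it is assumed to form a polynomial algebra; (iii) it contains the constant functionals, by assumption; and (iv) it separates points of $K_{L}^{-}(D)$, again by assumption. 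Stone--Weierstrass then yields that $\mathcal{F}_{w}$ is dense in $C(E)$, i.e. for every $F \in C(E)$ and every $\epsilon > 0$ there exists $F_{h}^{T} \in \mathcal{F}_{w}$ with $\sup_{v \in K_{L}^{-}(D)} |F(v) - F_{h}^{T}(v)| < \epsilon$.

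It then remains to translate this functional-level density into the filter-level statement. Given an arbitrary $w$-fading memory filter $M_{*}$, its corresponding functional $F_{*} = M_{*}(\cdot)_{0}$ is continuous on $E$ by definition, hence lies in $C(E)$; the density just established supplies $M_{h}^{T} \in \mathcal{M}_{w}$ whose functional satisfies $\sup_{v \in K_{L}^{-}(D)} |F_{*}(v) - F_{h}^{T}(v)| < \epsilon$. I would then invoke the recovery identity $M(u)_{k} = F(P \circ M_{-k}(u))$, which holds for both $M_{*}$ and $M_{h}^{T}$ by time-invariance and causality. For any $u \in K_{L}(D)$ and any $k \in \mathbb{Z}$, the shifted truncation $P \circ M_{-k}(u)$ still takes values in $D \cap [-L, L]$ and therefore lies in $K_{L}^{-}(D)$, so $|M_{*}(u)_{k} - M_{h}^{T}(u)_{k}| = |F_{*}(P \circ M_{-k}(u)) - F_{h}^{T}(P \circ M_{-k}(u))| \leq \sup_{v \in K_{L}^{-}(D)} |F_{*}(v) - F_{h}^{T}(v)| < \epsilon$. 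Taking the supremum over $k$ gives $\|M_{*}(u) - M_{h}^{T}(u)\|_{\infty} < \epsilon$, uniformly in $u$, as required.

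The step I expect to demand the most care is the final lift: one must observe that the functional approximation is uniform over the entire domain $K_{L}^{-}(D)$, and that every causal shift-and-truncation $P \circ M_{-k}(u)$ of an admissible input again lands in this same domain. This is exactly what allows the single $\epsilon$-bound on functionals to transfer into a bound that is uniform in both $k$ and $u$ at the filter level. The genuinely substantive ingredients---compactness of $(K_{L}^{-}(D), \|\cdot\|_{w})$ and the algebra, constant-containment and separation properties of $\mathcal{F}_{w}$---are furnished by Lemma~\ref{lemma:compactness} and the theorem's assumptions, so no further analytic work is needed beyond this bookkeeping.
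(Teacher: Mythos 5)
Your proposal is correct and follows essentially the same route as the paper's proof: density of $\mathcal{F}_{w}$ in $C(K_{L}^{-}(D), \|\cdot\|_{w})$ via Lemma~\ref{lemma:compactness} and Theorem~\ref{theorem:stone}, followed by the lift to filters through the recovery identity $M(u)_{k} = F(P \circ M_{-k}(u))$ and the observation that $P \circ M_{-k}(u) \in K_{L}^{-}(D)$. The only difference is presentational: you spell out the verification of the Stone--Weierstrass hypotheses and the uniformity bookkeeping more explicitly than the paper does.
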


\begin{proof}
$\mathcal{F}_{w}$ is dense follows from Lemma \ref{lemma:compactness} and Theorem \ref{theorem:stone}. To prove the second part of the theorem, since $\mathcal{F}_{w}$ is dense in $C(K_{L}^{-}(D), \| \cdot \|_{w})$, for any $w$-fading memory functional $F_{*}$ and any $\epsilon>0$ , there exists $F_{h}^{T} \in \mathcal{F}_{w}$ such that for all $u_{-} \in K_{L}^{-}(D)$,  
$
| F_{*}(u_{-}) - F_{h}^{T}(u_{-})| < \epsilon.
$
For $u \in K_{L}(D)$, notice that $P \circ M_{-k}(u) \in K_{L}^{-}(D)$ for all $k\in \mathbb{Z}$, hence
\begin{equation*}
\begin{split}
\left| F_{*}(P \circ M_{-k}(u)) - F_{h}^{T}(P \circ M_{-k}(u)) \right| = \left| M_{*}(u)_{k} - M_{h}^{T}(u)_{k} \right| < \epsilon.
\end{split}
\end{equation*}
Since this is true for all $k \in \mathbb{Z}$, therefore for all $u \in K_{L}(D)$, $\|M_{*}(u) - M_{h}^{T}(u) \|_{\infty} < \epsilon$.
\end{proof}

\subsection{Fading memory property and polynomial algebra} \label{app:FMP_algebra}
Before we prove the universality of the family of dissipative quantum systems introduced in Sec.~\ref{sec:universal_class} in the main text, we first show two important observations regarding to the multivariate polynomial output in Eq.~\eqref{eq:mv-out}. 

We specify $h$ to be the multivariate polynomial as in the right hand side of Eq.~\eqref{eq:mv-out} in the main text. For ease of notation, we drop the subscript $h$ in $F_{h}^{T}$ and $M_{h}^{T}$. Let $\mathcal{F} = \{F^{T}\}$ be the set of functionals induced from dissipative quantum systems given by Eqs.~\eqref{eq:rho-dynamics} and \eqref{eq:mv-out} in the main text. We will show in Lemma~\ref{lemma:FMP} that the convergence and continuity of $T$ are sufficient to guarantee the fading memory property of $F^{T}$, and in Lemma~\ref{lemma:algebra} that $\mathcal{F}$ forms a polynomial algebra, made of fading memory functionals. In the following, let $\mathcal{L}(\mathbb{C}^{2^n})$ be the set of linear operators on $\mathbb{C}^{2^n}$, and for a CPTP map $T$, for all $u_k \in D \cap [-L, L]$, define $\|T(u_k)\|_{2-2} \coloneqq \sup_{A \in \mathcal{L}(\mathbb{C}^{2^n}), \|A\|_{2}=1} \|T(u_k) A\|_{2}$.

\begin{lemma}[Fading memory property]
\label{lemma:FMP}
Consider a $n$-qubit dissipative quantum system with dynamics Eq.~\eqref{eq:rho-dynamics} and output Eq.~\eqref{eq:mv-out}. Suppose that for all $u_k \in D \cap [-L, L]$, the CPTP map $T(u_k)$ satisfies the condition in Theorem \ref{theorem:convergence}, so that it is convergent. Moreover, for any $\epsilon > 0$, there exists $\delta_{T}(\epsilon) > 0$ such that $\|T(x) - T(y)\|_{2-2} < \epsilon$ whenever $|x - y| < \delta_{T}(\epsilon)$ for $x, y \in D \cap [-L, L]$. Then for any null sequence $w$, the induced filter $M^{T}$ and the corresponding functional $F^{T}$ are $w$-fading memory.
\end{lemma}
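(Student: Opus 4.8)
The plan is to prove the $w$-fading memory property directly from its definition: since a filter is $w$-fading memory exactly when its associated functional is continuous on $(K_L^-(D),\|\cdot\|_w)$, it suffices to show that $F^T$ is continuous with respect to the weighted norm. Recalling that $F^T(u)=h(\rho_0(u))$ with $\rho_0(u)=\left(\overrightarrow{\prod}_{j=0}^{\infty}T(u_{-j})\right)\rho_{-\infty}$ well-defined by Lemma~\ref{lemma:limit}, I would first reduce the claim to continuity of the map $u\mapsto\rho_0(u)$ from $(K_L^-(D),\|\cdot\|_w)$ into $(\mathcal{D}(\mathbb{C}^{2^n}),\|\cdot\|_2)$. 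This reduction is routine: each expectation obeys $|\mathrm{Tr}((\rho-\sigma)Z^{(i)})|\le\|\rho-\sigma\|_2\,\|Z^{(i)}\|_2$ by the Schatten Hölder inequality and ranges over the compact set $[-1,1]$, on which the polynomial $h$ is uniformly continuous, so continuity of $u\mapsto\rho_0(u)$ transfers to $F^T$.

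To control $\|\rho_0(u)-\rho_0(v)\|_2$ by $\|u-v\|_w$ I would combine a truncation estimate with a telescoping estimate. Fix a reference state $\sigma=I/2^n$ and write $\Phi_N(u)=T(u_0)T(u_{-1})\cdots T(u_{-(N-1)})$. Given $\eta>0$, choose $N$ with $2(1-\epsilon)^N<\eta/3$. Exactly as in the Cauchy estimate of Lemma~\ref{lemma:limit}, $\rho_0(u)-\Phi_N(u)\sigma=\Phi_N(u)(\rho^{*}_{-N}-\sigma)$ is $\Phi_N(u)$ applied to a traceless Hermitian operator of norm at most $2$, where $\rho^{*}_{-N}$ is the limiting state produced by the dynamics up to time $-N$; Theorem~\ref{theorem:convergence} then gives $\|\rho_0(u)-\Phi_N(u)\sigma\|_2\le 2(1-\epsilon)^N<\eta/3$, and likewise for $v$. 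This absorbs the distant past uniformly in the input.

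For the recent block I would expand $\Phi_N(u)-\Phi_N(v)=\sum_{m=0}^{N-1}T(u_0)\cdots T(u_{-(m-1)})\,(T(u_{-m})-T(v_{-m}))\,T(v_{-(m+1)})\cdots T(v_{-(N-1)})$ and apply it to $\sigma$. The rightmost factors send $\sigma$ to a density operator of $\|\cdot\|_2$-norm at most $1$; the middle difference then yields a \emph{traceless Hermitian} operator of norm at most $\|T(u_{-m})-T(v_{-m})\|_{2-2}$; and the leftmost block, being a composition of $m$ CPTP maps, preserves tracelessness and Hermiticity and hence contracts it by $(1-\epsilon)^m$ via Theorem~\ref{theorem:convergence}. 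Summing the geometric series yields $\|\Phi_N(u)\sigma-\Phi_N(v)\sigma\|_2\le\frac{1}{\epsilon}\max_{0\le m<N}\|T(u_{-m})-T(v_{-m})\|_{2-2}$.

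Finally I would close the estimate using the input-continuity hypothesis together with the shape of the weighted norm. Since $w$ is decreasing, $|u_{-m}-v_{-m}|\le\|u-v\|_w/w_m\le\|u-v\|_w/w_{N-1}$ for all $0\le m<N$, where $w_{N-1}>0$ is now fixed. Choosing $\|u-v\|_w<w_{N-1}\,\delta_T(\epsilon\eta/3)$ forces $\|T(u_{-m})-T(v_{-m})\|_{2-2}<\epsilon\eta/3$ for every such $m$, so the telescoping term falls below $\eta/3$; the triangle inequality then gives $\|\rho_0(u)-\rho_0(v)\|_2<\eta$. I expect the one genuine subtlety to be in the telescoping step: one must observe that each difference $(T(u_{-m})-T(v_{-m}))\rho$ lands in the hyperplane $H_0(2^n)$, since the contraction bound of Theorem~\ref{theorem:convergence} holds only there and not on all of $\mathcal{L}(\mathbb{C}^{2^n})$ (where CPTP maps need not contract the $\|\cdot\|_2$-norm), and to note that only the finitely many recent weights $w_0,\dots,w_{N-1}$ are inverted, so the decay of $w$ never obstructs the bound.
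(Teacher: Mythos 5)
Your proof is correct, and while it shares the paper's overall skeleton (truncate the remote past, telescope the recent block, then exploit that only the finitely many weights $w_0,\dots,w_{N-1}$ get inverted), the key estimates are executed differently, and in a way that buys something. The paper bounds the telescoping sum at the level of superoperator norms, $\|\Phi_N(u)-\Phi_N(v)\|_{2-2}\le 2^n\sum_{l}\|T(u_{-l})-T(v_{-l})\|_{2-2}$, which forces it to invoke the external boundedness result $\|T\|_{2-2}\le\sqrt{2^n}$ for CPTP maps (its Lemma~\ref{lemma:bounded-CPTP}) and leaves a factor $2^n(N(\epsilon)+1)$ that must be absorbed into the choice of $\delta_T$. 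You instead apply the telescoping identity to a concrete state and observe that each summand has the structure (contraction block)$\circ$(difference)$\circ$(density operator), where the difference of two CPTP maps applied to a density operator lands in $H_0(2^n)$ — precisely the subtlety you flagged — so the left block contracts by $(1-\epsilon)^m$ via Theorem~\ref{theorem:convergence} and the sum is dominated by the geometric series $\frac{1}{\epsilon}\max_m\|T(u_{-m})-T(v_{-m})\|_{2-2}$. This avoids the boundedness lemma entirely and yields a bound that is uniform in $N$ and dimension-free, which is a genuine sharpening. Your other deviations are cosmetic: you reduce to continuity of $u\mapsto\rho_0(u)$ and compose with the polynomial $h$ (uniformly continuous on $[-1,1]^n$), where the paper reduces to the linear terms and cites closure of continuity under sums and products; and you anchor the tail to the fixed reference $\sigma=I/2^n$, where the paper compares the two input-dependent tail states $\rho_u,\rho_v$ directly — both rest on the same contraction-on-$H_0(2^n)$ argument.
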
                                                                            
\begin{proof}
We first state the boundedness of CPTP maps \cite[Theorem 2.1]{perez2006contractivity}.
\begin{lemma}
\label{lemma:bounded-CPTP}
For a CPTP map $T : \mathcal{L}(\mathbb{C}^{2^n}) \rightarrow \mathcal{L}(\mathbb{C}^{2^n})$, we have $\left\| T \right\|_{2-2} \leq \sqrt{2^n}$.
\end{lemma}
Moreover, recall that ${\rm Tr}(\cdot)$ is continuous, that is for any $\epsilon>0$, there exists $\delta_{\rm Tr}(\epsilon) > 0$ such that $|{\rm Tr}(A-B)|<\epsilon$ whenever $\|A - B \|_{2} <\delta_{\rm Tr}(\epsilon)$ for any complex matrices $A, B$. Note that here $\| \cdot \|_{2}$ denotes the Schatten $2$-norm or the Hilbert-Schmidt norm.

Let $w$ be an arbitrary null sequence. We will show the linear terms $L(u)$ in the functional $F^{T}$ are continuous with respect to $\left\| \cdot \right\|_{w}$, and the continuity property of $F^T$ follows from the fact that finite sums and products of continuous elements are also continuous.

For any $u, v \in K_{L}^{-}(D)$,
\begin{equation*}
\begin{split}
 \left| L(u) - L(v) \right| = \left| {\rm Tr} \left( Z^{(i_{1})} \left( \left(\overrightarrow{\prod}_{k=0}^{\infty}T(u_{-k}) \right) \rho_{-\infty} -  \left(\overrightarrow{\prod}_{k=0}^{\infty}T(v_{-k}) \right) \rho_{-\infty} \right) \right) \right|.
\end{split}
\end{equation*}
Denote $\rho_{u} = \left( \overrightarrow{\prod}_{k=N}^{\infty}T(u_{-k}) \right)\rho_{-\infty}$ and $\rho_{v} = \left(\overrightarrow{\prod}_{k=N}^{\infty}T(v_{-k}) \right) \rho_{-\infty}$ for some $0 < N < \infty$,
\begin{equation}
\label{eq:norm_of_linear_term}
\begin{split}
& \left\| Z^{(i_{1})} \left( \left( \overrightarrow{\prod}_{k=0}^{\infty} T(u_{-k}) \right) \rho_{-\infty} -  \left( \overrightarrow{\prod}_{k=0}^{\infty} T(v_{-k}) \right) \rho_{-\infty} \right)\right\|_{2} \\
& \leq \left\|Z^{(i_{1})} \right\|_{2} \left( \left\| \overrightarrow{\prod}_{k=0}^{N-1} T(u_{-k})  - \overrightarrow{\prod}_{k=0}^{N-1}T(v_{-k}) \right\|_{2-2} \left\| \rho_{u} \right\|_{2}  + \left\| \left( \overrightarrow{\prod}_{k=0}^{N-1} T(v_{-k}) \right) (\rho_{u} - \rho_{v}) \right\|_{2} \right).
\end{split}
\end{equation}
Since $T$ satisfies conditions in Theorem \ref{theorem:convergence}, any two density operators converge uniformly to one another. Therefore, for any $\epsilon>0$, there exists $N(\epsilon) > 0$ such that for all $N' > N(\epsilon)$,
\begin{equation}
\label{eq:bound_1}
\left\| \left(\overrightarrow{\prod}_{k=0}^{N'-1}T(v_{-k}) \right) (\rho_{u} - \rho_{v}) \right\|_{2} < \frac{\delta_{\rm Tr}(\epsilon)}{2 \left\| Z^{(i_{1})} \right\|_{2}}.
\end{equation} 
Choose $N' = N(\epsilon) + 1$ and bound the first term inside the bracket on the right hand side of Eq.~\eqref{eq:norm_of_linear_term} by rewriting it as a telescopic sum:
\begin{equation}
\label{eq:FMP_first_term}
\begin{split}
& \left\| \overrightarrow{\prod}_{k=0}^{N(\epsilon)}T(u_{-k})  - \overrightarrow{\prod}_{k=0}^{N(\epsilon)}T(v_{-k})  \right\|_{2-2} \\
& = \left\| \sum_{l=0}^{N(\epsilon)} \left( T(v_{0}) \cdots T(v_{-(l- 1)}) T(u_{-l})  T(u_{-(l+1)}) \cdots T(u_{-N(\epsilon)}) \right. \right. \\
& \hskip 5em \left.  - T(v_{0}) \cdots T(v_{-(l-1)}) T(v_{-l}) T(u_{-(l+1)}) \cdots T(u_{-N(\epsilon)}) \right) \bigg\|_{2-2} \\
&\leq \sum_{l=0}^{N(\epsilon)} \left\| T(v_{0}) \cdots T(v_{-(l-1)}) \right\|_{2-2} \left\| T(u_{-l}) - T(v_{-l}) \right\|_{2-2} \left\|T(u_{-(l+1)}) \cdots T(u_{-N(\epsilon)}) \right\|_{2-2} \\
& \leq 2^{n} \sum_{l=0}^{N(\epsilon)} \left\| T(u_{-l}) - T(v_{-l}) \right\|_{2-2},
\end{split}
\end{equation}
where the last inequality follows from Lemma \ref{lemma:bounded-CPTP}. We claim that for any $\epsilon > 0$, if 
$$\|u - v \|_{w} = \mathop{\sup_{k \in \mathbb{Z}^{-}}} \left| u_{k} - v_{k} \right| w_{-k} < \delta_{T}\left( \frac{\delta_{\rm Tr} (\epsilon)}{2^{n+1} \left\| Z^{(i_1)} \right\|_{2} (N(\epsilon)+1)} \right) w_{N(\epsilon)}$$
then $\left| L(u) - L(v) \right| < \epsilon$. Indeed, since $w$ is decreasing, the above condition implies that
\begin{equation*}
\begin{split}
\mathop{\max_{0 \leq l \leq N(\epsilon)}} \left| u_{-l} - v_{-l} \right| w_{N(\epsilon)} < \delta_{T}\left( \frac{\delta_{\rm Tr} (\epsilon)}{2^{n+1} \left\| Z^{(i_1)} \right\|_{2} (N(\epsilon)+1)} \right)w_{N(\epsilon)}.
\end{split}
\end{equation*}
Since $w_{N(\epsilon)} > 0$, for all $0 \leq l \leq N(\epsilon)$,
$$
|u_{-l} - v_{-l}| < \delta_{T}\left( \frac{\delta_{\rm Tr} (\epsilon)}{2^{n+1} \left\| Z^{(i_1)} \right\|_{2} (N(\epsilon)+1)} \right).
$$
By continuity of $T$, we bound Eq.~\eqref{eq:FMP_first_term} by
\begin{equation}
\label{eq:bound_2}
\begin{split}
2^{n} \sum_{l=0}^{N(\epsilon)}\left\|T(u_{-l}) - T(v_{-l})  \right\|_{2-2} 
& < 2^{n} \sum_{l=0}^{N(\epsilon)} \frac{\delta_{\rm Tr} (\epsilon)}{2^{n+1} \left\| Z^{(i_1)} \right\|_{2} (N(\epsilon)+1)}  = \frac{\delta_{\rm Tr} (\epsilon)}{2 \left\| Z^{(i_1)} \right\|_{2}}.
\end{split}
\end{equation}
Since $\left\| \rho_{u} \right\|_{2} \leq 1$, Eqs.~\eqref{eq:norm_of_linear_term}, \eqref{eq:bound_1} and \eqref{eq:bound_2} give
\begin{equation*}
\begin{split}
&\left\|Z^{(i_1)} \right\|_{2} \left(\left\| \overrightarrow{\prod}_{k=0}^{N(\epsilon)}T(u_{-k}) -  \overrightarrow{\prod}_{k=0}^{N(\epsilon)}T(v_{-k}) \right\|_{2-2} \left\| \rho_{u} \right\|_{2}  + \left\| \left( \overrightarrow{\prod}_{k=0}^{N(\epsilon)}T(v_{-k}) \right) (\rho_{u} - \rho_{v}) \right\|_{2} \right) \\
& < \delta_{\rm Tr}(\epsilon).
\end{split}
\end{equation*}
The result now follows from the continuity of ${\rm Tr}(\cdot)$.
\end{proof}

\begin{lemma}[Polynomial algebra]
\label{lemma:algebra}
Let $\mathcal{F}=\{F^{T}\}$ be a family of functionals induced by dissipative quantum systems defined by Eqs.~\eqref{eq:rho-dynamics} and \eqref{eq:mv-out} in the main text. If for each member $F^{T} \in \mathcal{F}$, $T$ satisfies the conditions in Lemma~\ref{lemma:FMP}, then for any null sequence $w$, $\mathcal{F}$ forms a polynomial algebra consisting of $w$-fading memory functionals.
\end{lemma}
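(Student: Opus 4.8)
The plan is to verify two things: that every member of $\mathcal{F}$ is a $w$-fading memory functional, and that $\mathcal{F}$ is closed under the operations making it a commutative, unital polynomial algebra, namely scalar multiplication, addition, multiplication, and inclusion of the constant functionals. The first point is essentially Lemma~\ref{lemma:FMP}: since each $T$ satisfies the convergence and continuity hypotheses, the induced functional $F^{T}$ is continuous on the compact metric space $(K_{L}^{-}(D), \|\cdot\|_{w})$ (Lemma~\ref{lemma:compactness}), i.e.\ it lies in $C(K_{L}^{-}(D),\|\cdot\|_{w})$. Because the latter is a commutative algebra under pointwise operations, it then suffices to exhibit $\mathcal{F}$ as a subset of it that is closed under these operations and contains the constants.

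The key construction is the combination of non-interacting subsystems. Given two members $F^{T_{1}}, F^{T_{2}} \in \mathcal{F}$, induced by systems on $n_{1}$ and $n_{2}$ system qubits with CPTP maps $T_{1}$ and $T_{2}$, I would form the combined system whose CPTP map is the tensor product $T_{1} \otimes T_{2}$ acting on $\mathbb{C}^{2^{n_{1}+n_{2}}}$, driven by the same input sequence and initialized in a product state. Since the subsystems do not interact, the joint state remains a product $\rho_{k} = \rho_{k}^{(1)} \otimes \rho_{k}^{(2)}$, so for a system qubit $i$ belonging to subsystem $a$ one has $\langle Z^{(i)} \rangle_{k} = {\rm Tr}(\rho_{k}^{(a)} Z^{(i)})$, the other factor contributing a trace of one. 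Hence in the combined system all of the expectation values entering $F^{T_{1}}$ and $F^{T_{2}}$ are simultaneously available, each depending only on its own subsystem. Because the output functional in Eq.~\eqref{eq:mv-out} is an arbitrary multivariate polynomial in these expectation values, of any chosen degree $R$ and with arbitrary weights and constant $C$, I can realize the constant functionals (all weights zero), scalar multiples (rescaling the weights), the sum $F^{T_{1}}+F^{T_{2}}$ (taking $R$ to be the maximum of the two degrees), and the product $F^{T_{1}}F^{T_{2}}$ (taking $R$ to be the sum of the two degrees). Each of these is again of the form Eq.~\eqref{eq:mv-out} on the combined system, hence a member of $\mathcal{F}$, which establishes closure.

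The main obstacle, and the reason for imposing a per-subsystem convergence hypothesis rather than one on the joint map, is that the tensor product $T_{1} \otimes T_{2}$ of two convergent maps need not itself satisfy the one-step contraction bound $\|(T_{1}\otimes T_{2})\rvert_{H_{0}(2^{n_{1}+n_{2}})}\|_{2-2} \le 1-\epsilon$ of Theorem~\ref{theorem:convergence}: writing each $T_{j}$ in block form with respect to the splitting into scalar multiples of the identity and traceless operators, the non-unitality of a generic CPTP map produces off-diagonal coupling blocks whose Hilbert--Schmidt size grows with the subsystem dimension and can push the joint one-step norm above one. I would therefore not apply Lemma~\ref{lemma:FMP} to the joint map directly. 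Instead I would argue factor-by-factor: with respect to the same splitting the restriction of $T_{1}\otimes T_{2}$ to $H_{0}(2^{n_{1}+n_{2}})$ is block lower-triangular with diagonal blocks $T_{1}\rvert_{H_{0}}$, $T_{2}\rvert_{H_{0}}$ and $T_{1}\rvert_{H_{0}} \otimes T_{2}\rvert_{H_{0}}$, all of spectral radius below one, so the joint filter is still well defined and convergent (the analogue of Lemma~\ref{lemma:limit}), while each individual expectation $u \mapsto \langle Z^{(i)} \rangle_{0}$ is computed entirely within its own convergent subsystem and is $w$-fading memory by Lemma~\ref{lemma:FMP}. The combined output, being a finite polynomial in such functionals, is then continuous on $(K_{L}^{-}(D),\|\cdot\|_{w})$, so it is a well-defined member of $\mathcal{F}$ that is $w$-fading memory, completing the proof that $\mathcal{F}$ is a polynomial algebra of $w$-fading memory functionals.
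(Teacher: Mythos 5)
Your proposal is correct, and on the algebraic side it coincides with the paper's proof: both combine two members by tensoring the non-interacting systems driven by the same input, relabel the qubits, and observe that $F^{T^{(1)}}+\lambda F^{T^{(2)}}$ and $F^{T^{(1)}}F^{T^{(2)}}$ are again outputs of the form of Eq.~\eqref{eq:mv-out}, with degree $\max\{R_1,R_2\}$ for the sum and $R_1+R_2$ for the product, and constants obtained by zeroing the weights. Where you genuinely diverge is the analytic step certifying that the combined functional is $w$-fading memory. The paper applies Lemma~\ref{lemma:FMP} to the joint map after asserting that $\|(T^{(1)}\otimes T^{(2)})\rvert_{H_{0}(2^{n})}\|_{2-2}\leq(1-\epsilon_{1})(1-\epsilon_{2})$, which it argues by decomposing a jointly traceless $A=\sum_{i}A_{i}\otimes\tilde{A}_{i}$ and applying $T^{(1)}\rvert_{H_{0}(2^{n_{1}})}$ to the factors $A_{i}$. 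Your skepticism about exactly this point is well founded: the factors $A_{i}$ of a jointly traceless operator need not themselves be traceless (take $A=I\otimes Z$, for which $A_{1}=I$), so the restricted norm bound $1-\epsilon_{1}$ cannot be invoked on them, and the unrestricted norm can be as large as $\sqrt{2^{n_{1}}}$ by Lemma~\ref{lemma:bounded-CPTP}. Indeed the asserted inequality is false in general for non-unital channels: with $T^{(1)}(X)={\rm Tr}(X)\,|0\rangle\langle 0|$ and $T^{(2)}(X)=(1-\delta)X+\delta\,{\rm Tr}(X)I/2$ on single qubits, one has $\|T^{(1)}\rvert_{H_{0}(2)}\|_{2-2}=0$ and $\|T^{(2)}\rvert_{H_{0}(2)}\|_{2-2}=1-\delta$, yet $(T^{(1)}\otimes T^{(2)})(I\otimes Z)=2(1-\delta)\,|0\rangle\langle 0|\otimes Z$, so that $\|(T^{(1)}\otimes T^{(2)})\rvert_{H_{0}(4)}\|_{2-2}\geq\sqrt{2}\,(1-\delta)>1$ for small $\delta$, while the paper's bound would force it to be $0$. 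Your route---product initial states remain product under non-interacting dynamics, so each $\langle Z^{(i)}\rangle$ is a functional of the input through its own convergent subsystem, is $w$-fading memory by Lemma~\ref{lemma:FMP} applied subsystem-wise, and finite sums and products of continuous functionals on the compact space $(K_{L}^{-}(D),\|\cdot\|_{w})$ are continuous---bypasses any one-step bound on the joint map and closes the argument correctly; it is also exactly what Proposition~\ref{prop:universality_class} needs, since there the contraction hypothesis is imposed per subsystem.

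One place to tighten your own argument: for well-definedness of the joint filter on arbitrary (possibly entangled) initial states, ``spectral radius below one'' of the diagonal blocks is not by itself sufficient, because the propagator is a time-ordered product of different input-dependent maps. What does the work is the structure you already identified: the joint $N$-step propagator is $\mathcal{T}_{1}^{(N)}\otimes\mathcal{T}_{2}^{(N)}$, and on the three blocks $H_{0}\otimes I$, $I\otimes H_{0}$, $H_{0}\otimes H_{0}$ of the joint traceless space it produces terms such as $\mathcal{T}_{1}^{(N)}B\otimes\mathcal{T}_{2}^{(N)}(I/2^{n_{2}})$, which vanish as $N\to\infty$ because the subsystem propagators contract traceless operators uniformly in the input at rate $(1-\epsilon_{j})^{N}$ while $\mathcal{T}_{2}^{(N)}(I/2^{n_{2}})$ stays a density operator of bounded Hilbert--Schmidt norm. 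With that phrasing replaced, your proof is complete, and it repairs rather than reproduces the gap in the paper's own argument.
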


\begin{proof}
Consider two dissipative quantum systems described by Eqs.~\eqref{eq:rho-dynamics} and \eqref{eq:mv-out}, with $n_{1}$ and $n_{2}$ system qubits respectively. Let $\rho_{k}^{(m)} \in \mathcal{D}(\mathbb{C}^{2^{n_m}})$ be the state and $T^{(m)}$ be the CPTP map of the $m^{\rm th}$ system. Let $j_{1} = 1, \ldots, n_{1}$ and $j_{2} = 1, \ldots, n_{2}$ be the respective qubit indices for the two systems. For the observable $Z^{(j_{m})}$ of qubit $j_{m}$, notice that 
\begin{equation*}
\begin{split}
{\rm Tr}\left(Z^{(j_{1})} \rho_{k}^{(1)} \right) & = {\rm Tr} \left( (Z^{(j_{1})} \otimes I) (\rho_{k}^{(1)} \otimes \rho_{k}^{(2)}) \right), \\
{\rm Tr}\left( Z^{(j_{2})} \rho_{k}^{(2)} \right) &= {\rm Tr} \left(( I \otimes Z^{(j_{2})}) (\rho_{k}^{(1)} \otimes \rho_{k}^{(2)}) \right),
\end{split}
\end{equation*}
where $I$ is the identity operator. Therefore, we can relabel the qubit for the combined system described by the density operator $\rho_{k}^{(1)} \otimes \rho_{k}^{(2)}$ by $j$, running from $j = 1$ to $j = n_{1} + n_{2}$. Using this notation, the above expectations can be re-expressed as
\begin{equation*}
\begin{split}
{\rm Tr}\left( Z^{(j_{1})} \rho_{k}^{(1)}  \right) & = {\rm Tr}\left(Z^{(j)} \rho_{k}^{(1)} \otimes \rho_{k}^{(2)} \right), \quad j = j_{1}\\
{\rm Tr}\left( Z^{(j_{2})} \rho_{k}^{(2)} \right) & = {\rm Tr}\left( Z^{(j)} \rho_{k}^{(1)} \otimes \rho_{k}^{(2)}  \right), \quad j = n_{1} + j_{2}.
\end{split}
\end{equation*}
Following this idea, write out the outputs of two systems as follows,
\begin{eqnarray*}
\bar{y}_{k}^{(1)} & = &  C_{1} + \sum_{d_1 = 1 }^{R_1} \sum_{i_1 = 1}^{n_1} \cdots \sum_{i_{n_1} = i_{n_1 - 1} + 1 }^{n_1} \sum_{r_{i_1} + \cdots +r_{i_{n_1}} = d_1} w_{i_1, \ldots, i_{n_1}}^{r_{i_1}, \ldots, r_{i_{n_1}}} \langle Z^{(i_1)} \rangle_{k}^{r_{i_1}} \cdots \langle Z^{(i_{n_1})} \rangle_{k}^{r_{i_{n_1}}}, \\
\bar{y}_{k}^{(2)} & = &  C_{2} + \sum_{d_2 = 1 }^{R_2} \sum_{j_1 = 1}^{n_2} \cdots \sum_{j_{n_2} = j_{n_2 - 1} + 1 }^{n_2} \sum_{r_{j_1} + \cdots +r_{j_{n_2}} = d_2} w_{j_1, \ldots, j_{n_2}}^{r_{j_1}, \ldots, r_{j_{n_2}}} \langle Z^{(j_1)} \rangle_{k}^{r_{j_1}} \cdots \langle Z^{(j_{n_2})} \rangle_{k}^{r_{j_{n_2}}}.
\end{eqnarray*}
For any $\lambda \in \mathbb{R}$, let $n = n_{1} + n_{2}$ and $k$ denote the qubit index of the combined system running from $k=1$ to $k=n$, and $R = \max\{R_1, R_2\}$, then
\begin{eqnarray*}
\bar{y}_{k}^{(1)} + \lambda \bar{y}_{k}^{(2)} & = & C_{1} + \lambda C_{2} + \sum_{d=1}^{R} \sum_{k_{1} = 1}^{n} \cdots \sum_{k_{n} = k_{n-1} + 1}^{n} \sum_{r_{k_{1}} + \cdots + r_{k_n} = d} \bar{w}_{k_1, \ldots, k_n}^{r_{k_1}, \ldots, r_{k_n}} \langle Z^{(k_1)} \rangle_{k}^{r_{k_1}} \cdots \langle Z^{(k_n)} \rangle_{k}^{r_{k_n}},
\end{eqnarray*}
where the weights $\bar{w}_{k_1, \ldots, k_n}^{r_{k_1}, \ldots, r_{k_n}}$ are changed accordingly. For instance, if all $k_{m} \leq n_{1}$ for $m = 1, 2, \ldots, n$, then $\bar{w}_{k_1, \ldots, k_n}^{r_{k_1}, \ldots, r_{k_n}}  = w_{i_1, \ldots, i_{n_1}}^{r_{i_1}, \ldots, r_{i_{n_1}}}$, corresponding to the weights for the output $\bar{y}^{(1)}_{k}$. Similarly, let $R = R_1 + R_2$, 
\begin{eqnarray*}
\bar{y}_{k}^{(1)} \bar{y}_{k}^{(2)} & = & C_{1} C_{2} + \sum_{d=1}^{R} \sum_{k_{1} = 1}^{n} \cdots \sum_{k_{n} = k_{n-1} + 1}^{n} \sum_{r_{k_{1}} + \cdots + r_{k_n} = d} \hat{w}_{k_1, \ldots, k_n}^{r_{k_1}, \ldots, r_{k_n}} \langle Z^{(k_1)} \rangle_{k}^{r_{k_1}} \cdots \langle Z^{(k_n)} \rangle_{k}^{r_{k_n}}.
\end{eqnarray*}
Therefore, $\bar{y}_{k}^{(1)} + \lambda \bar{y}_{k}^{(2)}$ and $\bar{y}_{k}^{(1)} \bar{y}_{k}^{(2)}$ again have the same form as the right hand side of Eq.~\eqref{eq:mv-out} in the main text. This implies that for any functionals $F^{T^{(1)}}, F^{T^{(2)}} \in \mathcal{F}$, $F^{T^{(1)}} + \lambda F^{T^{(2)}} \in \mathcal{F}$ and $F^{T^{(1)}} F^{T^{(2)}} \in \mathcal{F}$. Thus, $\mathcal{F}$ forms a polynomial algebra.

It remains to show that for all $u_k \in D \cap [-L, L]$, $\| T(u_k) \rvert_{H_0(2^{n})} \|_{2-2} = \| (T^{(1)}(u_k) \otimes T^{(2)}(u_k))\rvert_{H_0(2^n)} \|_{2-2} \leq 1-\epsilon$ for some $0 < \epsilon \leq 1$. This will imply that $F^{T^{(1)}} + \lambda F^{T^{(2)}}$ and $F^{T^{(1)}} F^{T^{(2)}}$ are $w$-fading memory by Lemma~\ref{lemma:FMP}, and that $\mathcal{F}$ forms a polynomial algebra consisting of $w$-fading memory functionals. Suppose that for all $u_k \in D \cap [-L, L]$, $\|T(u_{k})\rvert_{H_0(2^{n_m})}\|_{2-2} \leq 1-\epsilon_{m}$ for $m=1, 2$. Adopting the proof of \cite[Proposition 3]{kubrusly2006concise}, let $A = \sum_{i} A_{i} \otimes \tilde{A}_{i}$ be a traceless Hermitian operator. Without loss of generality, we assume that $\{\tilde{A}_i\}$ is an orthonormal set with respect to the Hilbert-Schmidt inner product. Then $\{A_{i} \otimes \tilde{A}_{i}\}$ and $\{T^{(1)}(u_k)\rvert_{H_0(2^{n_{1}})} A_{i} \otimes \tilde{A}_{i} \}$ are two orthogonal sets. By the Pythagoras theorem, $T^{(1)}(u_k)\rvert_{H_0(2^{n_{1}})} \otimes I$ on the hyperplane of traceless Hermitian operators satisfies

\begin{equation*}
\begin{split}
& \| (T^{(1)}(u_k)\rvert_{H_0(2^{n_{1}})} \otimes I ) \sum_{i} A_{i} \otimes \tilde{A}_{i} \|_{2}^2  =\sum_{i} \| T^{(1)}(u_k)\rvert_{H_{0}(2^{n_{1}})} A_{i} \otimes \tilde{A}_i \|_{2}^2 \\
& = \sum_{i} \| T^{(1)}(u_k)\rvert_{H_{0}(2^{n_{1}})}A_{i} \|_2^{2} \|\tilde{A}_i \|_2^2 \leq \|T^{(1)}(u_k)\rvert_{H_{0}(2^{n_{1}})} \|_{2-2}^2 \sum_{i} \|A_i \|_2^2 \| \tilde{A}_i \|_2^2 \\
& = \|T^{(1)}(u_k)\rvert_{H_{0}(2^{n_{1}})} \|_{2-2}^2  \sum_{i}\| A_i\otimes \tilde{A}_i\|
_2^2 = \|T^{(1)}(u_k)\rvert_{H_{0}(2^{n_{1}})} \|_{2-2}^2  \| \sum_{i} A_i\otimes \tilde{A}_i\|_2^2 .
\end{split}
\end{equation*}
Therefore, $\|T^{(1)}(u_k)\rvert_{H_0(2^{n_1})} \otimes I \|_{2-2} \leq \|T^{(1)}(u_k)\rvert_{H_{0}(2^{n_1})}\|_{2-2}$. Similarly, a symmetric 
argument shows that $\|I \otimes T^{(2)}(u_k)\rvert_{H_0(2^{n_2})}\|_{2-2} \leq \|T^{(2)}(u_k)\rvert_{H_{0}(2^{n_2})}\|_{2-2}$. Therefore, when restricted to traceless Hermitian operators,
\begin{equation*}
\begin{split}
& \|(T^{(1)}(u_k) \otimes T^{(2)}(u_k))\rvert_{H_0(2^{n})}\|_{2-2} = \|(T^{(1)}(u_k)\rvert_{H_0(2^{n_1})} \otimes I) (I \otimes T^{(2)}(u_k)\rvert_{H_0(2^{n_2})})\|_{2-2} \\
&\leq \|T^{(1)}(u_k)\rvert_{H_0(2^{n_1})} \otimes I\|_{2-2} \|I \otimes T^{(2)}(u_k)\rvert_{H_0(2^{n_2})} \|_{2-2} \\
&\leq \|T^{(1)}(u_k)\rvert_{H_{0}(2^{n_1})}\|_{2-2} \|T^{(2)}(u_k)\rvert_{H_{0}(2^{n_2})}\|_{2-2} \leq (1-\epsilon_{1})(1-\epsilon_{2}).
\end{split}
\end{equation*}
The convergence of $T$ follows from Theorem~\ref{theorem:convergence}.
\end{proof}

\subsection{A universal class} \label{app:universal_class}

We now prove the universality of the class of dissipative quantum systems introduced in the main text. Recall that this class consists of $N$ non-interacting quantum subsystems initialized in a product state of the $N$ subsystems, where the dynamics of subsystem $K$ with $n_K$ qubits is governed by the CPTP map:
\begin{equation}\label{eqn:ES-CPTP}
T_{K}(u_k) \rho_{k-1}^{K} = {\rm Tr}_{i_{0}^{K}} (e^{-i H_{K} \tau} \rho_{k-1}^{K} \otimes \rho_{i_0, k}^{K} e^{i H_{k} \tau}),
\end{equation}
where 
$$ 
\rho_{i_{0}, k}^{K} = u_{k} |0 \rangle \langle 0| + (1-u_{k}) |1 \rangle \langle 1|, \quad  0 \leq u_{k} \leq 1
$$
\begin{equation}\label{eqn:exp-Ham}
H_{K} = \sum_{j_1 = 0}^{n_{K}} \sum_{j_2 = j_1 + 1}^{n_K} J_{K}^{j_1, j_2} (X^{(i^K_{j_{1}})} X^{(i^K_{j_{2}})} + Y^{(i^K_{j_{1}})} Y^{(i^K_{j_{2}})}) + \sum_{j=0}^{n_{K}} \alpha Z^{(i_j^K)},
\end{equation}
with $J^{j_{1}, j_{2}}_{K}$ and $\alpha$ being real-valued constants and ${\rm Tr}_{i_{0}^{K}}$ denoting the partial trace over the ancilla qubit. Let $\overline{H}_{K} = I \otimes \cdots \otimes H_{K} \otimes \cdots \otimes I$ with $H_{K}$ in the $K$-th position, the total Hamiltonian of $N$ subsystems is 
\begin{equation}
\label{eq:ES-hamiltonian}
\begin{split}
H & = \sum_{K = 1}^{N} \overline{H}_{K}.
\end{split}
\end{equation}
Writing $\rho_{k} = \bigotimes_{K=1}^{N} \rho_{k}^{K}$, the overall dynamics and the output are given by
\begin{equation}
\label{eq:ES-overall}
\begin{cases}
\rho_{k}  = T(u_{k}) \rho_{k-1} = \bigotimes_{K=1}^{N} T_{K}(u_{k}) \rho^{K}_{k-1}\\
\bar{y}_{k}  = h(\rho_{k}),
\end{cases}
\end{equation}
where $h$ is the multivariate polynomial defined by the right hand side of Eq.~\eqref{eq:mv-out} in the main text.

\begin{proposition}
\label{prop:universality_class}
Let $\mathcal{M}_{S}$ be the set of filters induced from dissipative quantum systems described by Eq.~\eqref{eq:ES-overall} such that each $T_{K}$ $(K = 1, \ldots, N)$ satisfies conditions in Theorem \ref{theorem:convergence}. Then for any null sequence $w$, the corresponding family of functionals $\mathcal{F}_{S}$ is dense in $C(K_{1}^{-}([0, 1]), \| \cdot \|_{w})$.
\end{proposition}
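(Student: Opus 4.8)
The plan is to obtain density as a direct application of the Stone--Weierstrass machinery already assembled in Theorem~\ref{theorem:universal_dissipative}. That theorem reduces the claim to verifying three properties of $\mathcal{F}_{S}$: that it is a polynomial algebra of $w$-fading memory functionals, that it contains the constant functionals, and that it separates points of $K_{1}^{-}([0,1])$. I would dispatch the first two quickly and concentrate the real work on separation.

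For the algebra-of-fading-memory part, I would invoke Lemmas~\ref{lemma:FMP} and \ref{lemma:algebra}. Convergence of each $T_{K}$ is assumed in the hypothesis, so I only need the continuity requirement of Lemma~\ref{lemma:FMP}: this is immediate because the input enters solely through the ancilla state $\rho_{i_{0},k}^{K} = u_{k}|0\rangle\langle0| + (1-u_{k})|1\rangle\langle1|$, which is affine in $u_{k}$, while $T_{K}$ is obtained from it by a unitary conjugation and a partial trace, both bounded linear maps; hence $u_{k}\mapsto T_{K}(u_{k})$ is continuous in $\|\cdot\|_{2-2}$. Lemma~\ref{lemma:algebra} then certifies that $\mathcal{F}_{S}$ is closed under sums, scalar multiples and products, and its tensor-product contractivity estimate guarantees that the composite map $T=\bigotimes_{K}T_{K}$ remains convergent, so every element is $w$-fading memory. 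Containment of constants is trivial: setting all polynomial weights in Eq.~\eqref{eq:mv-out} to zero leaves the free constant $C$.

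The crux is separation, which I would establish, as the statement preceding the proposition indicates, using a single convergent qubit ($N=1$, $n=1$, $R=1$) with linear output $\bar{y}_{k}=C+w\langle Z^{(1)}\rangle_{k}$. The key computation is that for this system the observable obeys a \emph{closed} scalar affine recursion,
\begin{equation*}
\langle Z\rangle_{k} = a\,\langle Z\rangle_{k-1} + (1-a)(2u_{k}-1), \qquad a = \cos^{2}(2J\tau),
\end{equation*}
which one checks by diagonalizing $e^{-iH\tau}$ on the two invariant blocks $\{|00\rangle,|11\rangle\}$ and $\{|01\rangle,|10\rangle\}$ and tracing out the ancilla; notably the $\langle X\rangle,\langle Y\rangle$ coherences drop out, so the recursion genuinely closes on $\langle Z\rangle$. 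Iterating to the steady state gives $\langle Z\rangle_{0}=(1-a)\sum_{j\geq 0}a^{j}(2u_{-j}-1)$. Given distinct $u_{-},v_{-}\in K_{1}^{-}([0,1])$, let $m$ be the smallest delay with $u_{-m}\neq v_{-m}$; then $\langle Z\rangle_{0}^{u}-\langle Z\rangle_{0}^{v}=2(1-a)a^{m}[(u_{-m}-v_{-m})+\sum_{j\geq 1}a^{j}(u_{-m-j}-v_{-m-j})]$, whose bracketed tail is bounded in modulus by $a/(1-a)$ since the inputs lie in $[0,1]$. Choosing $J,\tau$ so that $a$ is small enough that $a/(1-a)<|u_{-m}-v_{-m}|$, while ensuring the resulting single-qubit map is convergent, forces the leading term to dominate, so the two outputs differ and the pair is separated.

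I expect this separation step to be the main obstacle: it is the one place requiring explicit control of the quantum dynamics rather than a soft functional-analytic argument, and one must simultaneously arrange that the chosen parameters yield a genuinely convergent (contractive on all of $H_{0}(2)$, not merely the $Z$-direction) member of the family. With the three hypotheses in hand, Theorem~\ref{theorem:universal_dissipative} (via the compactness of Lemma~\ref{lemma:compactness} and the Stone--Weierstrass Theorem~\ref{theorem:stone}) yields that $\mathcal{F}_{S}$ is dense in $C(K_{1}^{-}([0,1]),\|\cdot\|_{w})$, completing the proof.
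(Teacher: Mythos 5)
Your proposal is correct, and its scaffolding coincides with the paper's: both reduce the claim to Theorem~\ref{theorem:universal_dissipative} by verifying the hypotheses of Lemmas~\ref{lemma:FMP} and~\ref{lemma:algebra} (the paper establishes the continuity requirement exactly as you do, exploiting the affine dependence on $u_k$ to get $\|T_K(x)-T_K(y)\|_{2-2}=|x-y|\,\|\tilde{T}\|_{2-2}$ for an input-independent CPTP map $\tilde{T}$), and both obtain constants by zeroing the polynomial weights. Where you genuinely diverge is the separation step. The paper also derives the single-qubit dynamics you use — via the normal (Pauli-transfer) representation of the CPTP map, whose $Z$-row reproduces your recursion $\langle Z\rangle_k=\cos^2(2J)\langle Z\rangle_{k-1}+\sin^2(2J)(2u_k-1)$ — but it then argues by cases: if $u_0\neq v_0$ it sets $J=\pi/4$ (so $a=0$) and the outputs differ by $w(u_0-v_0)$; if $u_0=v_0$ it forms the power series $f(\theta)=\sum_{j\geq 0}\theta^j(u_{-j}-v_{-j})$ and invokes an isolated-zeros lemma for real power series (Lemma~\ref{lemma:power_series}, cited from complex analysis) to find a small $\theta=\cos^2(2J)\neq 0$ with $f(\theta)\neq 0$. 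Your argument replaces this with a single quantitative estimate: factor out $a^m$ at the first index of disagreement, bound the tail by the geometric sum $a/(1-a)$, and take $a$ positive but below the explicit threshold $|u_{-m}-v_{-m}|/(1+|u_{-m}-v_{-m}|)$. This is more elementary and self-contained — no case split, no appeal to analytic function theory — and it makes the admissible parameter range explicit; the paper's route is the softer one, dispatching the tail by a standard lemma rather than an estimate. Both arguments correctly use a pair-dependent member of the family (legitimate, since separation is a property of the family, not of a single system), and both check, as you note, that the chosen parameters keep the map contractive on all of $H_0(2)$, where the paper computes $\|\overline{T}\rvert_{H_0(2)}\|_{2-2}=|\cos(2J)|$.
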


\begin{proof}
We first show $T_{K}(x)$ satisfies the conditions in Lemma~\ref{lemma:FMP} for all $x \in [0, 1]$. Let $x, y \in [0, 1]$ and $Z$ be the Pauli Z operator. By definition,
\begin{equation*}
\begin{split}
\| T_{K} (x) - T_{K}(y) \|_{2-2}  & = \sup_{\substack{A \in \mathcal{L}(\mathbb{C}^{2^n}) \\ \|A \|_{2} = 1}} \| (T_{K}(x) - T_{K}(y)) A \|_{2} \\
& = \sup_{\substack{A \in \mathcal{L}(\mathbb{C}^{2^{n}}) \\ \|A\|_{2} = 1}} \| {\rm Tr}_{i_{0}}^{K}(e^{-i H_{K} \tau} A \otimes (x-y) Z e^{i H_{K} \tau}) \|_{2} \\
& = |x-y| \sup_{\substack{A \in \mathcal{L}(\mathbb{C}^{2^{n}}) \\ \|A \|_{2} = 1}} \| {\rm Tr}_{i_{0}}^{K}(e^{-i H_{K} \tau} A \otimes  Z e^{i H_{K} \tau}) \|_{2} \\
& = |x - y| \| \tilde{T} \|_{2-2},
\end{split}
\end{equation*}
where $\tilde{T}$ is an input-independent CPTP map.

Now, the same argument in the proof of Lemma \ref{lemma:algebra} shows that $T = T_{1} \otimes \cdots \otimes T_{N}$ is convergent given the assumptions on each $T_K$. Furthermore, given two convergent systems whose dynamics are described by Eq.~\eqref{eq:ES-overall} with Hamiltonians $H^{(1)}$ and $H^{(2)}$, the total Hamiltonian of the combined system is $H = H^{(1)} \otimes I + I \otimes H^{(2)}$, which again has the form Eq.~\eqref{eq:ES-hamiltonian}. Therefore, by the above observation and Lemma \ref{lemma:algebra}, $\mathcal{F}_{S}$ forms a polynomial algebra, consisting of $w$-fading memory functionals for any null sequence $w$.

It remains to show $\mathcal{F}_{S}$ contains constants and separates points. Constants can be obtained by setting the weights $w_{i_1, \ldots, i_{n}}^{r_{i_1}, \ldots, r_{i_n}}$ in the output to be zero. To show the family $\mathcal{F}_{S}$ separates points, we state the following lemma for later use, whose proof can be found in \cite[Theorem 3.2]{lang1985complex}. 
\begin{lemma}
\label{lemma:power_series}
Let $f(\theta) = \sum_{n=0}^{\infty} x_{n} \theta^{n}$ be a non-constant real power series, having a non-zero radius of convergence. If $f(0) = 0$, then there exists $\beta>0$ such that $f(\theta) \neq 0$ for all $\theta$ with $\left|\theta \right| \leq \beta$ and $\theta \neq 0$.
\end{lemma}
Consider a single-qubit system interacting with a single ancilla qubit whose dynamics is governed by  Eq.~\eqref{eq:ES-overall}. Order an orthogonal basis of $\mathcal{L}(\mathbb{C}^{2})$ as $\mathcal{B} = \{I, Z, X, Y\}$.
Recall that the normal representations of a CPTP map $T$ and a density operator $\rho$ are given by
\begin{equation*}
\overline{T}_{i,j} = \frac{{\rm Tr}\left(B_{i} T(B_{j}) \right)}{2} \qquad \text{and} \qquad 
\overline{\rho}_{i} = \frac{{\rm Tr}(\rho B_{i})}{2},
\end{equation*}
where $B_{i} \in \mathcal{B}$. Without loss of generality, let $\tau = 1$ and set $J_{1}^{j_1, j_2} = J \in \mathbb{R}$ for all $j_{1}, j_{2}$ in the Hamiltonian given by Eq.~\eqref{eqn:exp-Ham}. We obtain the normal representation of the CPTP map defined in Eq.~\eqref{eqn:ES-CPTP} as
$$
\overline{T}(u_{k}) = \begin{pmatrix}
1 & 0 & 0 & 0 \\
\sin^2(2J)(2u_{k}-1) & \cos^{2}(2J) & 0 & 0\\
0 & 0 & \cos(2J)\cos(2\alpha) & -\cos(2J)\sin(2\alpha) \\
0 & 0 & \cos(2J)\sin(2\alpha) & \cos(2J)\cos(2\alpha)
\end{pmatrix}.
$$
When restricted to the hyperplane of traceless Hermitian operators, 
\begin{equation*}
\overline{T}\rvert_{H_{0}(2)} = \begin{pmatrix}
\cos^{2}(2J) & 0 & 0\\
0 & \cos(2J)\cos(2\alpha) & -\cos(2J)\sin(2\alpha) \\
0 & \cos(2J)\sin(2\alpha) & \cos(2J)\cos(2\alpha)
\end{pmatrix}
\end{equation*}
with $\left\| \overline{T}\rvert_{H_{0}(2)} \right\|_{2-2} = \sigma_{\max}(\overline{T}\rvert_{H_{0}(2)}) = |\cos(2J)|$. Here, $\left\| \cdot \right\|_{2-2}$ is the matrix $2$-norm and $\sigma_{\max}(\cdot)$ is the maximum singular value. Choose $J \neq \frac{z \pi}{2}$ for $z \in \mathbb{Z}$, then $|\cos(2J)| \leq 1-\epsilon$ for some $0 < \epsilon \leq 1$. By Theorem \ref{theorem:convergence}, $T$ is convergent and we choose an arbitrary initial density operator $\overline{\rho}_{-\infty} = \begin{pmatrix} 1/2 & 1/2 & 0 & 0 \end{pmatrix}^{T}$, corresponding to $\rho_{-\infty} = |0 \rangle \langle 0|$. If we only take the expectation $\langle Z \rangle$ in the output Eq.~\eqref{eq:mv-out} by setting the degree $R=1$, then this single-qubit dissipative quantum system induces a functional
\begin{equation*}
\begin{split}
F^{T}(u) = w \left[ \left( \overrightarrow{\prod}_{j=0}^{\infty} \overline{T}(u_{-j}) \right) \overline{\rho}_{-\infty} \right]_{2} + C,
\end{split}
\end{equation*}
for all $u \in K^{-}_{1}([0, 1])$. Here, $[\cdot]_{2}$ refers to the second element of the vector, corresponding to $\langle Z \rangle$ given the order of the orthogonal basis elements in $\mathcal{B}$. Given two input sequences $u \neq v$ in $K^{-}_{1}([0, 1])$, consider two cases:\\
(i) If $u_{0} \neq v_{0}$, choose $J = \frac{\pi}{4}$ such that $\cos^2(2J) = 0$ and $\sin^{2}(2J)=1$. Then
$$
F^{T}(u) - F^{T}(v) = w(u_{0} - v_{0}) \neq 0.
$$
(ii)If $u_{0} = v_{0}$, 
$$
F^{T}(u) - F^{T}(v) = w \sin^2(2J) \sum_{j=0}^{\infty} \left(\cos^2(2J) \right)^{j}(u_{-j} - v_{-j}).
$$
Let $\theta = \cos^{2}(2J)$, then given our choice of $J$, $0 \leq \theta \leq 1-\epsilon$ and $\sin^2(2J) \geq \epsilon$ for some $0< \epsilon \leq 1$. Consider the power series
$$ 
f(\theta) = \sum_{j=0}^{\infty} \theta^{j} (u_{-j} - v_{-j}),
$$
since $\left| u_{-j} - v_{-j} \right| \leq 1$, $f(\theta)$ has a non-zero radius of convergence $R$ such that $(-1, 1) \subseteq R$. Moreover, $f(\theta)$ is non-constant and $f(0) = 0$. The separation of points follows from invoking Lemma \ref{lemma:power_series}.

Finally, the universality property of $\mathcal{F}_{S}$ follows from Theorem~\ref{theorem:universal_dissipative}.
\end{proof}

\subsection{Detailed numerical experiment settings} \label{app:numerical}
In this section, we describe detailed formulas for the NARMA tasks, simulation of decoherence and experimental conditions for ESNs and the Volterra series.

\subsubsection{The NARMA task}
The general $m$th-order NARMA I/O map is described as \cite{atiya2000new}:
$$
y_{k} = 0.3 y_{k-1} + 0.05 y_{k-1} \left( \sum_{j=0}^{\tau_{\rm NARMA}-1} y_{k-j-1} \right) + 1.5 u_{k-\tau_{\rm NARMA}} u_{k} + \gamma.
$$
where $\gamma \in \mathbb{R}$. In the main text, we consider $\tau_{\rm NARMA} = \{15, 20, 30, 40\}$. For $\tau_{\rm NARMA} = \{15, 20\}$, we set $\gamma = 0.1$. For $\tau_{\rm NARMA} = \{30, 40\}$, $\gamma$ is set to be $0.05$ and $0.04$ respectively. A random input sequence $u^{(r)}$, where each $u_{k}^{(r)}$ is randomly uniformly chosen from $[0, 0.2]$, is deployed for all the computational tasks. This range is chosen to ensure stability of the NARMA tasks.

\subsubsection{Decoherence}
\label{app:decoherence} 
We consider the dephasing, decaying and generalized amplitude damping (GAD) noise, which are of experimental importance. The dephasing noise has the Kraus operators \cite{nielsen2002quantum}:
$$
M_{0} = \sqrt{\frac{1 + \sqrt{1-p}}{2}} I, M_{1} = \sqrt{\frac{1 - \sqrt{1-p}}{2}} Z,
$$
where $\sqrt{1-p} = e^{-2 \frac{\gamma}{S}\delta_{t}}$. Therefore, we implement single-qubit phase-flip for all $n$ system and ancilla qubits. That is for $j=1,\ldots, n+1$ the density operator $\rho$ for the system and ancilla qubits undergoes the evolution:
$$
\rho \rightarrow \frac{1+e^{-2 \frac{\gamma}{S} \delta_{t}}}{2} \rho + \frac{1-e^{-2 \frac{\gamma}{S} \delta_{t}}}{2} Z^{(j)} \rho Z^{(j)},$$
where $Z^{(j)}$ denotes the Pauli $Z$ operator for qubit $j$.

The generalized amplitude damping (GAD) channel captures the effect of dissipation to an environment at a finite temperature $\lambda \in [0, 1]$. Its Kraus operators are defined by
\begin{equation*} 
\begin{split}
& M_{0} = \sqrt{\lambda}\begin{pmatrix}
1 & 0 \\ 0 & \sqrt{1-p} \end{pmatrix}, M_{2} = \sqrt{\lambda}\begin{pmatrix}0 & \sqrt{p}\\0 & 0 \end{pmatrix}, \\
& M_{3} = \sqrt{1-\lambda} \begin{pmatrix} \sqrt{1-p} & 0 \\ 0 & 1 \end{pmatrix}, M_{4} = \sqrt{1-\lambda} \begin{pmatrix} 0 & 0 \\\sqrt{p} & 0\end{pmatrix}.
\end{split}
\end{equation*}
When $\lambda=1$, the GAD channel corresponds to the amplitude damping channel (decaying noise). We simulate the generalized amplitude damping channel for $\lambda=\{0.2, 0.4, 0.6, 0.8\}$. To implement the GAD channel with the same noise strengths as the dephasing channel, we set $\sqrt{1-p} = e^{-2 \frac{\gamma}{S} \delta_{t}}, \sqrt{p} = \sqrt{1-e^{-4 \frac{\gamma}{S} \delta_t}}$ to be the same as the dephasing noise.

Following the discussion in Sec.~\ref{subsec:SA_decoherence}, Fig.~\ref{figure:GAD_NMSE} plots the average SA NMSE for the LRPO, Missile, NARMA15 and NARMA20 tasks under the GAD channel for all the chosen temperature parameters. Fig.~\ref{figure:final_steps_dd} and Fig.~\ref{figure:final_steps_GAD} plot the average sum of modulus of off-diagonal elements in the system density operator, for the last 50 timesteps of the SA samples, under all noise types discussed above.

\begin{figure}[htb!]
\centering
\includegraphics[trim={35mm 15mm 35mm 10mm}, clip, scale=0.42]{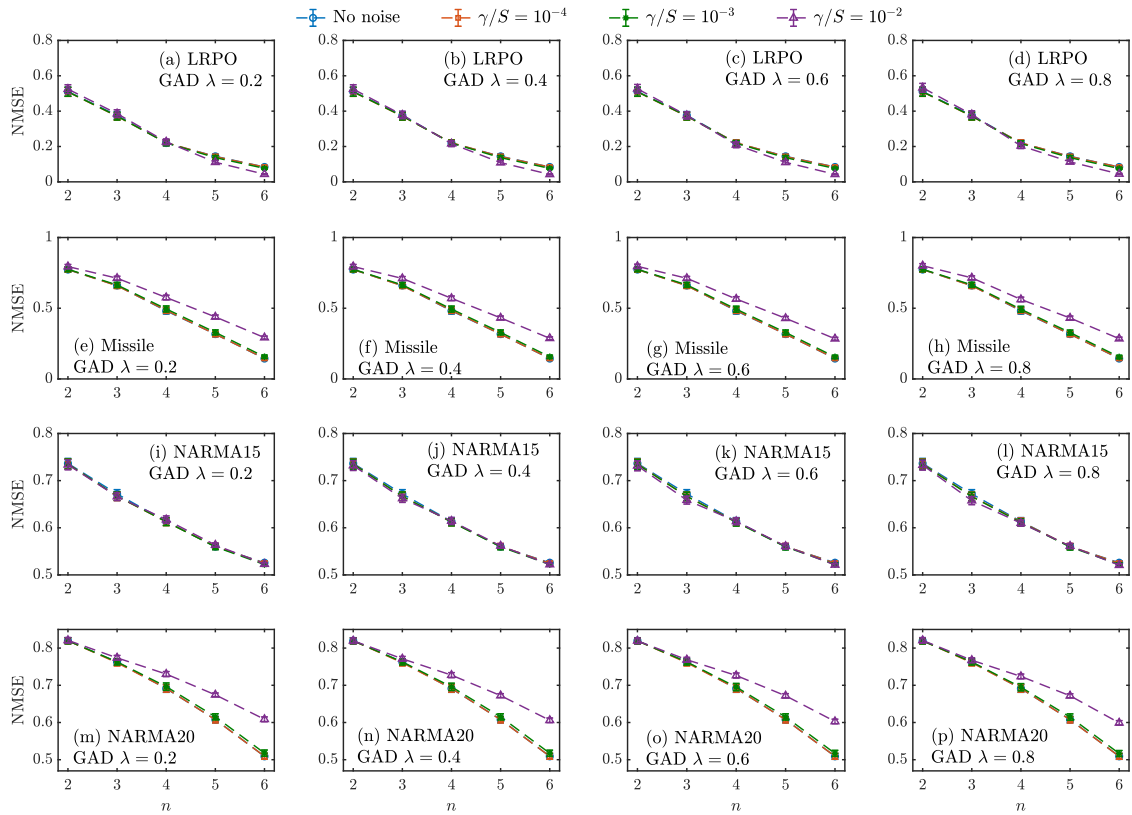}
\caption{Average SA NMSE for the LRPO, Missile, NARMA15 and NARMA20 tasks under GAD for $\lambda=\{0.2, 0.4, 0.6, 0.8\}$}
\label{figure:GAD_NMSE}
\end{figure}

\begin{figure}[htb!]
\centering
\includegraphics[trim={25mm 5mm 10mm 10mm}, clip, scale=0.4]{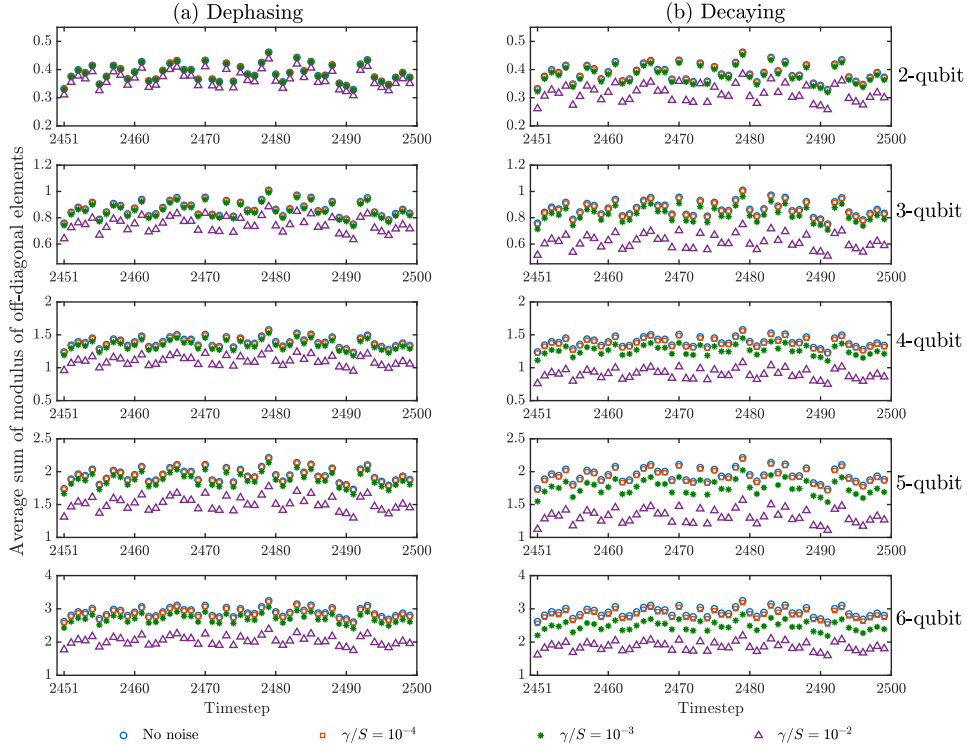}
\caption{Average sum of modulus of off-diagonal elements in the density operator, for the last 50 timesteps of the SA samples, under the (a) dephasing noise and (b) decaying noise}
\label{figure:final_steps_dd}
\end{figure}

\begin{figure}[htb!]
\centering
\includegraphics[trim={35mm 5mm 15mm 10mm}, clip, scale=0.4]{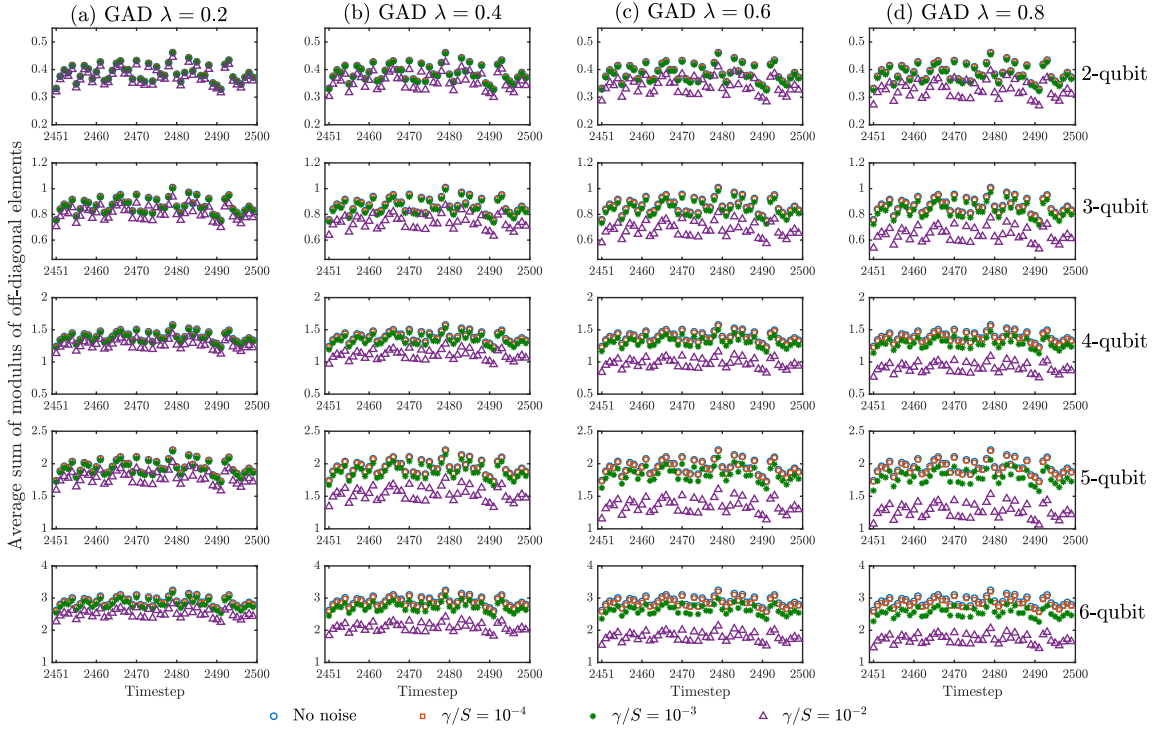}
\caption{Average sum of modulus of off-diagonal elements in the density operator, for the last 50 timesteps of the SA samples, under GAD for (a) $\lambda=0.2$, (b) $\lambda=0.4$, (c) $\lambda=0.6$ and (d) $\lambda=0.8$}
\label{figure:final_steps_GAD}
\end{figure}

\subsubsection{The echo state networks}
An ESN with $m$ reservoir nodes is a type of recurrent neural network with a $m \times 1$ input matrix $W_{i}$, a $m \times m$ reservoir matrix $W_{r}$ and an $1 \times m$ output matrix $W_{o}$.  The state evolution and output are given by \cite{JH04}
\begin{equation*}
\begin{cases}
x_{k} = \tanh(W_{r}x_{k-1} + W_{i}u_{k}) \\
\hat{y}_{k} = W_{o} x_k + w_{c}, 
\end{cases}
\end{equation*}
where $w_{c}$ is a tunable constant and $\tanh(\cdot)$ is an element-wise operation.

In the numerical examples, lengths of washout, learning and evaluation phases for ESNs and SA are the same. Given an output sequence $y$ to be learned, the output weights $w_{c}$ and $W_{o}$ are optimized via standard least squares to minimize $\sum_{k}|y_{k} - \hat{y}_{k}|^2$, for timesteps $k$ during the training phase. We now detail the experimental conditions for ESNs in various subsections of the numerical experiments (Sec.~\ref{sec:numerical}).

For the comparison given in Subsection~\ref{subsec:overview}, we set the reservoir size to be $m \in \mathcal{M}=\{10, 20, 30, 40, 50, 100, 150, 200, 250, 300, 400, 500, 600, 700, 800\}$. Here, the number of computational nodes is $m+1$ for each $m$. For each computational task and each $m$, the average NMSE of 100 ESNs is reported. The average NMSE for ESNs is obtained as follows. For each reservoir size $m$, we prepare 100 ESNs with elements of $W_{r}$ randomly uniformly chosen $[-2, 2]$. Let $\mathcal{S}$ denote the set of 10 points evenly spaced between $[0.01, 0.99]$. For each of the 100 ESNs, we scale the maximum singular value of $W_{r}$ to $\sigma_{\max}(W_{r})=s$ for all $s \in \mathcal{S}$. This ensures the convergence and fading memory property of ESNs \cite{grigoryeva2018echo}. For each of the chosen $s$, the elements of $W_{i}$ are randomly uniformly chosen within $[-\delta, \delta]$, where $\delta$ is chosen from the set $\mathcal{I}$ of $10$ points evenly spaced between $[0.01, 1]$. Now, for the $i$-th ($i=1, \ldots, 100$) ESN with parameter $(m, s, \delta)$, we denote its associated NMSE to be ${\rm NMSE}_{(m, s, \delta, i)}$. For each reservoir size $m$, the average NMSE is computed as $\frac{1}{\mathcal{|S|}} \frac{1}{\mathcal{|I|}} \frac{1}{100} \sum_{s \in \mathcal{S}}\sum_{\delta \in \mathcal{I}} \sum_{i=1}^{100}{\rm NMSE}_{(m, s, \delta, i)}$. Fig.~\ref{figure:ESN_NMSE_NARMA} summarizes the average ESNs NMSE for the LRPO, Missile, NARMA15 and NARMA20 tasks.


\begin{figure}[!htb]
\includegraphics[trim={0 0 0 0}, clip, scale=0.3]{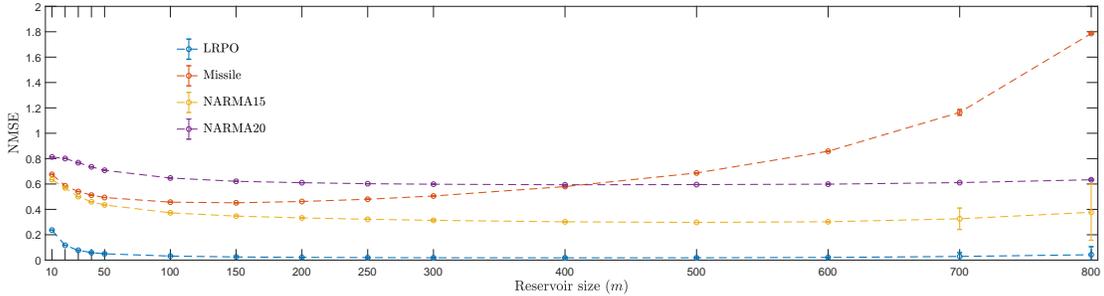}
\caption{Average NMSE of ESNs for the LRPO, Missile, NARMA15 and NARMA20 tasks. The data symbols obscure the error bars, which represent the standard error}
\label{figure:ESN_NMSE_NARMA}
\end{figure}

For the further comparison in Subsection~\ref{subsec:further_comparison}, ESNs are simulated to approximate the LRPO, Missile, NARMA15, NARMA20, NARMA30 and NARMA40 tasks. The reservoir size of ESNs for each task is set to be $m \in \mathcal{M} = \{256, 300, 400, 500\}$. For each $m$, the number of computational nodes $\mathcal{C}$ for ESNs is $$\mathcal{C} \in \mathcal{N}_{4} \cup \mathcal{N}_{5} \cup \mathcal{N}_{6} = \{5, 6, 7, 15, 21, 28, 35, 56, 70, 84, 126, 210, 252\},$$ where $\mathcal{N}_{n}$ denotes the chosen numbers of computational nodes for $n$-qubit SA defined as follows. Recall that in this experiment, 4-, 5- and 6-qubit SA with varying degrees $R$ in the output are chosen. For 4-qubit SA, $R_{4}=\{1, \ldots, 6\}$ correspond to the number of computational nodes $\mathcal{N}_{4} = \{5, 15, 35, 70, 126, 210\}$. For 5-qubit SA, $R_{5}=\{1, \ldots, 5\}$, such that $\mathcal{N}_{5} = \{6, 21, 56, 126, 252\}$. For 6-qubit SA, $R_{6}=\{1, \ldots, 4\}$, such that $\mathcal{N}_{6} = \{7, 28, 84, 210\}$. To compute the output weights $W_{o}$ and $w_{c}$ when $\mathcal{C} < m+1$, we first optimize $W_{o}$ and $w_{c}$ by standard least squares. Then choose $\mathcal{C}-1$ elements of $W_{o}$ with the largest absolute values and their corresponding elements $x_{k}'$ from the state $x_k$. These $\mathcal{C}-1$ state elements $x'_k$ are used to re-optimize $\mathcal{C}-1$ elements $W'_{o}$ of $W_{o}$ and $w'_{c}$ via standard least squares. At each timestep $k$, the full state $x_k$ evolves, while the output is computed as $\hat{y}' = W'_{o} x'_{k} + w'_{c}$. For this numerical experiment, the chosen parameters $\mathcal{S}$ and $\mathcal{I}$ of ESNs are the same as above. For the $i$-th ESN with parameter $(m, s, \delta)$, the number of computational nodes $\mathcal{C}$ varies. Let ${\rm NMSE}_{(m, \mathcal{C}, s, \delta, i)}$ denotes the corresponding NMSE. For each $m$ and each $\mathcal{C}$, we report the average NMSE computed as $\frac{1}{\mathcal{|S|}} \frac{1}{\mathcal{|I|}} \frac{1}{100} \sum_{s \in \mathcal{S}}\sum_{\delta \in \mathcal{I}} \sum_{i=1}^{100}{\rm NMSE}_{(m, \mathcal{C}, s, \delta, i)}$.

\subsubsection{The Volterra series}
The discrete-time finite Volterra series with kernel order $o$ and memory $p$ is given by \cite{BC85}
$$
\hat{y}_k = h_0 + \sum_{i=1}^{o} \sum_{j_1, \cdots, j_i = 0}^{p-1} h_{i}^{j_1, \cdots, j_i} \prod_{l = 1}^{i} u_{k-j_l},
$$
where $u_{k - j}$ is the delayed input, $h_{0}$ and $h_{i}^{j_1, \cdots, j_i}$ are real-valued kernel coefficients (or output weights in our context). Notice that when memory $p=1$, the Volterra series is a map from the current input $u_{k}$ to the output $\hat{y}_{k}$. The kernel coefficients are optimized via linear least squares to minimize $\sum_{k} |y_{k} - \hat{y}_{k}|^2$ during the training phase, where $y$ is the target output sequence to be learned.

The number of computational nodes, that is the number of kernel coefficients $h_{0}$ and $h_{i}^{j_1, \cdots, j_i}$, is given by $(p^{o+1} - p) / (p - 1) + 1$. We vary the parameters of the Volterra series as follows: for each $o = \{2, \ldots, 8 \}$, choose $p$ from $\{2, \ldots, 27\}$ such that the maximum number of computational nodes does not exceed 801. Note that for $o=1$, the output of the Volterra series is a linear function of delayed inputs. Since we are interested in nonlinear I/O maps, we choose $o \geq 2$. Table~\ref{table:Volterra_kp} summarizes the number of computational nodes as $o$ and $p$ vary. Fig.~\ref{figure:V_NMSE_o} shows the Volterra series NMSE according to the kernel order and memory.

\begin{table}[!htb]
\caption{Values of $o$ and $p$ for the Volterra series and the corresponding number of computational nodes. The empty entries indicate that for the chosen $o$ and $p$, the number of computational nodes exceeds 801}
\label{table:Volterra_kp}
\resizebox{\textwidth}{!}{
\scriptsize
\begin{tabular}{l l l l l l l l l l l l l l l l l l l l l l l l l l l l }
\hline\noalign{\smallskip}
\noalign{\vskip 0.5mm}
\backslashbox{$o$}{$p$} & 2 & 3 & 4 & 5 & 6 & 7 & 8 & 9 & 10 & 11 & 12 & 13 & 14 & 15 & 16 & 17 & 18 & 19 & 20 & 21 & 22 & 23 & 24 & 25 & 26 & 27 \\     
\hline
2 & 7 & 13 & 21 & 31 & 43 & 57 & 73 & 91 & 111 & 133 & 157 & 183 & 211 & 241 & 273 & 307 & 343 & 381 & 421 & 463 & 507 & 553 & 601 & 651 & 703 & 757\\
3 & 15 & 40 & 85 & 156 & 259 & 400 & 585  \\
4 & 31 & 121 & 341 & 781 \\
5 & 63 & 364 & & & & & & \\
6 & 127 & & & & & & & & \\
7 & 255 & & & & & & & & \\
8 & 511 & & & & & & & &  \\
\hline\noalign{\smallskip}
\noalign{\vskip 0.5mm}
\end{tabular}}

\end{table}

\begin{figure}[ht!]
\includegraphics[trim={25mm 10mm 25mm 0mm}, clip, scale=0.46]{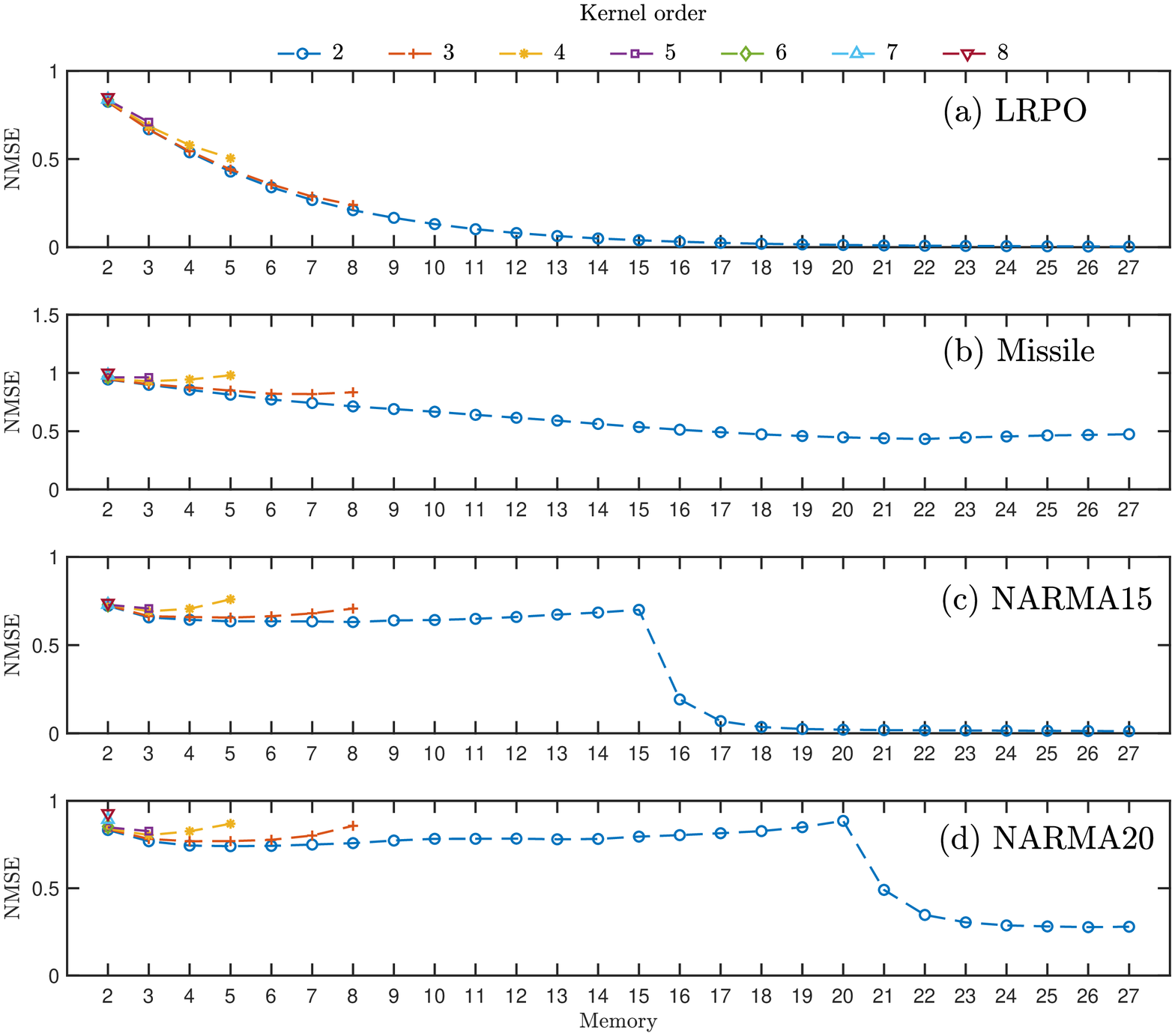}
\caption{NMSE of the Volterra series according to kernel order and memory, for the (a) LRPO, (b) Missile, (c) NARMA15 and (d) NARMA20 tasks}
\label{figure:V_NMSE_o}
\end{figure}

It is observed in Fig.~\ref{figure:V_NMSE_o} that as the kernel order increases, the Volterra series task performance does not improve. On the other hand, as the memory increases for kernel order 2, the Volterra series task performance improves. The improvement is particularly significant as the memory $p$ coincides with the delay for NARMA tasks, that is when $p = \tau_{\rm NARMA} + 1$.

\bibliographystyle{aipauth4-1}
\bibliography{qml}

\end{document}